\documentclass[11pt]{article}

\newif\ifFULL
\FULLtrue

\usepackage{thmtools,thm-restate} 
\usepackage[margin=1in]{geometry} 
\usepackage{amsmath,amsthm,amssymb,amsfonts}
\usepackage{xcolor}
\usepackage{algorithm}
\usepackage{algorithmic}

\newcommand{\E}{\mathbb{E}}
\newcommand{\one}{\mathbf{1}}
\newcommand{\M}{\mathcal{M}}
\newcommand{\OPT}{\textsc{OPT}}

\usepackage[colorlinks,citecolor=blue]{hyperref}

\newcommand\Thomas[1]{\textcolor{black}{#1}}
\newcommand\Sahil[1]{\textcolor{black}{#1}}

\newtheorem{theorem}{Theorem}

\newtheorem{lemma}[theorem]{Lemma}

\newtheorem{definition}[theorem]{Definition}

\newtheorem{claim}[theorem]{Claim}

\newtheorem{example}[theorem]{Example}

\newcounter{note}[section]
\newcommand{\snote}[1]{\refstepcounter{note}$\ll${\bf Sahil~\thenote:}
	{\sf \color{red}  #1}$\gg$\marginpar{\tiny\bf SS~\thenote}}
\newcommand{\tnote}[1]{\refstepcounter{note}$\ll${\bf Thomas~\thenote:}
	{\sf \color{teal}  #1}$\gg$\marginpar{\tiny\bf TK~\thenote}}

\newcommand{\Rev}{\text{Revenue}}
\newcommand{\Util}{\text{Utility}}
\newcommand{\val}{v}
\newcommand{\bval}{\textbf{v}}
\newcommand{\baseprice}{b}
\newcommand{\arrivaltime}{{T}}
\newcommand{\barrivaltime}{\textbf{T}}
\newcommand{\IGNORE}[1]{}

\newcommand{\Alg}{\text{Alg}}

\newcommand{\growingmid}{\mathrel{}\middle|\mathrel{}}


\newcommand*\samethanks[1][\value{footnote}]{\footnotemark[#1]}

\title{
	Prophet Secretary for Combinatorial Auctions and Matroids\footnote{Part of this work was done while the authors were visiting the Simons Institute for the Theory of Computing.} 
	}

\author{Soheil Ehsani\thanks{
		Department of Computer Science, University of Maryland, College Park, MD 20742 USA. Email: \texttt{\{ehsani,hajiagha\}@cs.umd.edu}. Supported in part by NSF CAREER award CCF-1053605, NSF BIGDATA grant IIS-1546108, NSF AF:Medium grant CCF-1161365, DARPA GRAPHS/AFOSR grant FA9550-12-1-0423, and another DARPA SIMPLEX grant.
	}
\texttt{}	\and MohammadTaghi Hajiaghayi\samethanks[2]
	\and Thomas Kesselheim\thanks{Department of Computer Science, TU Dortmund, 44221 Dortmund, Germany. Email: \texttt{thomas.kesselheim@tu-dortmund.de}.
	}
	\and Sahil Singla\thanks{
		Computer Science Department,
		Carnegie Mellon University, Pittsburgh, PA 15213, USA. Email: \texttt{ssingla@cmu.edu}. Supported in part by a CMU Presidential Fellowship and NSF awards CCF-1319811, CCF-1536002, and CCF-1617790
	}
}

\date{ \today}

\begin{document}
\maketitle
\thispagestyle{empty}

\begin{abstract}{
The {secretary} and the {prophet inequality} problems are central to the field of Stopping Theory. Recently, there has been a lot of work in generalizing these models to multiple items because of their applications in mechanism design. The most important of these generalizations are to matroids and to combinatorial auctions (extends bipartite matching).  Kleinberg-Weinberg~\cite{KW-STOC12} and Feldman et al.~\cite{feldman2015combinatorial} show that for adversarial arrival order of random variables the optimal prophet inequalities give a $1/2$-approximation. For many settings, however, it's conceivable that the arrival order is chosen uniformly at random, akin to the secretary problem. For such a random arrival model, we improve upon the $1/2$-approximation and obtain $(1-1/e)$-approximation prophet inequalities for both matroids and combinatorial auctions. 
This also gives  improvements to the results of Yan~\cite{yan2011mechanism} and Esfandiari et al.~\cite{esfandiari2015prophet} who worked in the special cases where we can fully control the arrival order or when there is only a single item.

Our techniques are threshold based. We  convert our discrete problem into a continuous setting and then give a generic template on  how to dynamically adjust these thresholds to lower bound the expected total welfare.










}\end{abstract}

\clearpage
\setcounter{page}{1}


\section{Introduction}
Suppose there is a sequence of $n$ buyers arriving with different \emph{values} to your \emph{single} item. On arrival a buyer offers a take-it-or-leave-it value for your item. How should you decide which buyer to assign the item to in order to maximize the value. There are two popular models in the field of Stopping Theory to study this problem: the \emph{secretary} and the \emph{prophet inequality} models. In the secretary model we assume  no prior knowledge about the buyer values but the buyers arrive in a uniformly random order~\cite{dynkin1963optimum}. Meanwhile, in the prophet inequality model we assume stochastic knowledge about the buyer values but the arrival order of the buyers is chosen by an adversary~\cite{krengel1978semiamarts,krengel1977semiamarts}. Since the two models complement each other, both have been widely studied in the fields of mechanism design and combinatorial optimization (see related work). 

These models assume that either the buyer values or the buyer arrival order is chosen by an adversary. In practice, however, it is often conceivable that there is no adversary acting against you. Can we design better strategies in such settings?
The  \emph{prophet secretary} model introduced in \cite{esfandiari2015prophet} is a natural way to consider such a process where we assume both a {stochastic knowledge} about buyer values and that the buyers arrive in a uniformly random order. The goal is to design a strategy that maximizes expected accepted value, where the expectation is over the random arrival order, the stochastic buyer values, and also any internal randomness of the strategy. 


In this paper, we consider generalizations of the above problem to combinatorial settings. Suppose the buyers correspond to elements of a matroid\footnote{A matroid $\mathcal{M}$ consists of a ground set $[n]=\{1,2,\ldots,n\}$ and a non-empty downward-closed set system $\mathcal{I} \subseteq 2^{[n]}$ satisfying the matroid exchange axiom: for all pairs of sets $I, J\in \mathcal{I}$ such that $\lvert I \rvert < \lvert J \rvert$, there exists an element $x\in J$ such that $I \cup \{x\} \in \mathcal{I}$. Elements of $\mathcal{I}$ are called independent sets.} and we are allowed to accept any independent set in this matroid rather than only a single buyer. The buyers again arrive and offer take-it-or-leave-it value for being accepted.
In the prophet inequality model, a surprising result of Kleinberg-Weinberg~\cite{KW-STOC12} gives a $1/2$-approximation strategy to this problem, i.e., the value of their strategy, in expectation, is at least {half} of the value of the expected offline optimum that selects the best set of buyers in \emph{hindsight}. Simple examples show that for adversarial arrival one cannot improve this factor. On the other hand, if we are also allowed to  control the arrival order of the buyers, Yan~\cite{yan2011mechanism} gives a $1-1/e\approx 0.63$-approximation strategy. But what if the arrival order is neither adversarial and nor in your control. In particular, can we beat the $1/2$-approximation for a uniformly random  arrival order? 
 


\vspace{0.2cm}
\noindent \textbf{Matroid Prophet Secretary Problem (MPS):} \emph{Given a matroid $\M=([n],\cal{I})$ on $n$ buyers (elements) and  independent  probability distributions on their values, suppose the outcome buyer values are revealed in a uniformly random order. Whenever a buyer value is revealed,   the problem is to immediately and irrevocably decide whether to \emph{select} the buyer. The goal is to maximize the sum of values of the selected buyers, while  ensuring that they are always feasible in $\cal{I}$.}
\vspace{0.2cm}

Besides being a natural problem that relates two important Stopping Theory models, MPS is also interesting because of its applications in mechanism design. Often while designing mechanisms, we have to balance between maximizing revenue/welfare and the simplicity of the mechanism. While there exist optimal mechanisms such as VCG or Myerson's mechanism, they are impractical in real markets~\cite{AM-CA06,Rothkopf-07}. On the other hand, simple \emph{Sequentially Posted Pricing mechanisms}, where we offer take-it-or-leave-it prices to buyers, are known to give good approximations to optimal mechanisms. This is because the problem gets reduced   to designing a prophet inequality~\cite{chawla2009sequential,yan2011mechanism,alaei2014bayesian,KW-STOC12,feldman2015combinatorial}. 

Esfandiari et al.~\cite{esfandiari2015prophet} study MPS in the  special case of a  rank~$1$ matroid and give a $(1-1/e)$-approximation algorithm. For general matroids, as in the original models of~\cite{chawla2009sequential,yan2011mechanism,KW-STOC12}, it was unclear prior to the work of this paper whether beating the factor of $1/2$ is possible. In Section~\ref{sec:matroid} we prove the following result.


\begin{restatable}{theorem}{MPS}\label{thm:matProphetSec}
There exists a $(1-1/e)$-approximation algorithm to MPS.
\end{restatable}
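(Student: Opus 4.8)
The plan is to pass to a continuous-time formulation and then run a time-varying ``balanced threshold'' policy, analyzed through an integral inequality that compares the algorithm's collected value to the expected offline optimum $\E[\OPT]$ (adding zero-value dummy elements if needed, we may assume $\OPT$ is a maximum-weight basis of $\M$). First I would replace the uniformly random permutation by independent arrival times $\arrivaltime_i\sim\mathrm{Unif}[0,1]$: almost surely these are distinct and induce a uniformly random order, so it suffices to analyze this model, and we gain a continuous time parameter $t\in[0,1]$ to integrate over.

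The algorithm maintains the accepted independent set $S_t$, and when buyer $i$ arrives at time $t=\arrivaltime_i$ with value $\val_i$ it accepts $i$ precisely when $S_t+i\in\mathcal I$ and $\val_i$ exceeds a threshold $\tau_i(t)$; no swaps are needed. The threshold is of Kleinberg--Weinberg type --- essentially a fixed fraction of the expected marginal contribution of $i$ to $\OPT$ conditioned on the history seen so far --- but scaled by a time factor $g(t)$ that is smaller for small $t$. The intuition for why this beats $1/2$: under a random order a buyer arriving early should be accepted under a more permissive threshold, since most other buyers are still to come; for a controlled order Yan exploits this by processing buyers in a carefully (weight-sorted) chosen order, but with a random order we cannot choose the order, so we instead build the ``slack'' into $g$. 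The function $g$ is the normalized solution of the balance equation that falls out of the analysis, which is exponential and satisfies $\int_0^1 g(t)\,dt = 1-1/e$.

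For the analysis I would track the potential $\Phi(t)=\E[\text{value of the max-weight basis of } \M \text{ contracted by } S_t]$, the expected optimum still attainable on top of what has been accepted (expectation over everything revealed after time $t$, conditioned on the history up to $t$). The goal is to establish, for every $t$ and every history: (i) the instantaneous expected rate at which the algorithm collects value is at least $g(t)$ times the optimum still ``on the table''; and (ii) $\Phi$ only decreases in a way that is paid for by value the algorithm accepts. Combining (i) and (ii) into a differential inequality for $t\mapsto\E[\val(S_t)]$ and integrating over $[0,1]$ yields $\E[\val(S_1)]\ge\big(\int_0^1 g\big)\,\E[\OPT]=(1-1/e)\,\E[\OPT]$. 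The crucial ingredient underlying (i)--(ii) is a matroid exchange argument: fixing the optimal basis of the realized values and applying the exchange axiom, one constructs an injection from this basis into $S_t$ together with the buyers who have not yet arrived, so that each optimal element is charged either to itself (if it can still be taken later) or to an already-accepted element whose threshold value it cannot substantially exceed. This bounds the value the matroid constraint forces us to forgo, as opposed to the value the thresholds forgo, which $g$ already accounts for.

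The main obstacle is exactly that matroid-exchange step. In the rank-$1$ case of Esfandiari et al.\ there is essentially nothing lost to feasibility beyond ``the item is taken,'' so the only real work is choosing $g$; for general matroids one must additionally argue that every rejection caused by $S_t+i\notin\mathcal I$ is compensated by an earlier high-value acceptance, and this argument has to be carried out while $S_t$, the thresholds $\tau_i(\cdot)$, and $\OPT$ are all correlated through the shared random history. Threading this charging argument through the time-dependent scaling $g(t)$ so that the integral inequality closes with the constant $1-1/e$, rather than degrading to $1/2$ or worse, is the technical heart of the proof.
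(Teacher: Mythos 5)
Your high-level architecture matches the paper: pass to continuous time with i.i.d.\ $\arrivaltime_i\sim\mathrm{Unif}[0,1]$, accept $i$ at time $t$ when feasible and $\val_i$ clears a Kleinberg--Weinberg-type base price scaled by a time-varying factor, track a potential $\Phi(t)=\E[\text{remaining optimum on }\M/S_t]$, and close an integral inequality yielding $1-1/e$. That is the paper's ``residual'' $r(t)=\E_{\bval,\hat{\bval},\barrivaltime}[R(A_t,\hat{\bval})]$. But the proposal has genuine gaps beyond detail.

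First, the discount direction is reversed. You want the threshold \emph{more} restrictive early and \emph{less} restrictive late, because an early arriver can afford to wait for better options; the paper sets the price at $\alpha(t)\cdot\baseprice_i(A_t)$ with $\alpha(t)=1-e^{t-1}$, which starts near $1-1/e$ and falls to $0$. Your $g(t)$ is described as small near $t=0$ (``more permissive early''), which is the wrong monotonicity, and the intuition you give for it is backwards. (Your $\int_0^1 g = 1-1/e$ is actually $\int_0^1(1-\alpha)$, i.e.\ the \emph{utility}-side integrand, not the threshold scale --- these two got conflated.)

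Second, and more substantively, the ``differential inequality for $\E[\val(S_t)]$'' is left unspecified, and it is precisely there that the paper's technique lives. The paper does not track $\E[\val(S_t)]$ directly; it splits the algorithm's value into revenue and utility and proves two separate integral bounds, $\E[\Rev]\geq-\int_0^1\alpha(t)\,r'(t)\,dt$ and $\E[\Util]\geq\int_0^1(1-\alpha(t))\,r(t)\,dt$, and then the choice $1-\alpha+\alpha'=0$ (hence $\alpha=1-e^{t-1}$) makes the sum telescope to $(1-1/e)\,r(0)$ via integration by parts. Your sketch of ``$\Phi$ only decreases in a way that is paid for by value the algorithm accepts'' is gesturing at the revenue bound, but the utility bound (lower-bounding the value the algorithm \emph{captures above} the price) is the half that actually needs the matroid structure.

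Third, the ``matroid exchange step'' you correctly flag as the heart of the argument is not resolved in the way your sketch suggests. The paper does not build an injection into $S_t$ plus not-yet-arrived buyers; instead, it (a) defines the base price $\baseprice_i(A)=\E_{\hat{\bval}}[R(A,\hat{\bval})-R(A\cup\{i\},\hat{\bval})]$ using an independent \emph{fresh sample} $\hat{\bval}$, thereby decorrelating thresholds from realized acceptances; (b) uses that $A_t$ is independent of $\val_i$ conditioned on $\arrivaltime_i=t$ (acceptances strictly before $t$), so $\val_i$ can be swapped for $\hat{\val}_i$ in the utility expression; (c) removes the conditioning on $\arrivaltime_i=t$ by a monotonicity argument (Lemma~\ref{lem:withwithouiMatroid}); and (d) invokes the bound $\sum_{i\in V}\baseprice_i(A_t)\le\E_{\hat{\bval}}[R(A_t,\hat{\bval})]$ for any $V$ independent in $\M/A_t$ (Lemma~\ref{lem:KWProp2}, essentially the Kleinberg--Weinberg/Lucier--Borodin critical-value bound). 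Without the fresh-sample device and these two lemmas, the correlations you worry about are not tamed, and the integral inequality does not close. So the proposal identifies the right obstacle but does not supply the missing mechanism.
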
 
\noindent Note that the approximation in this theorem as well as the following ones compare to the expected \emph{optimal offline solution} for the particular outcomes of the distributions. That is, in the case of matroids, we have $\E[\Alg] \geq (1-1/e) \cdot \E[ \max_{I \in \mathcal{I}} \sum_{i \in I} v_i ]$, where $v_i$ is the  value of buyer $i${\footnote{\Sahil{It's not known if $1-1/e$ is tight for MPS. In fact, it's even open if one can beat $1-1/e$  for a single item~\cite{abolhassani2017beating}.}}.

Next, let us consider a  combinatorial auctions setting. Suppose there are $n$ buyers that take combinatorial valuations (say, submodular) for $m$ indivisible items from $n$ independent probability distributions. The problem is to decide how to allocate the items to the buyers, while trying to  maximize the \emph{welfare}---the sum of valuations of all the buyers.
Feldman et al.~\cite{feldman2015combinatorial} show that for  XOS\footnote{A function $v\colon 2^M\rightarrow \mathcal{R}$ is an XOS function if there exists a collection of additive functions $A_1,\ldots,A_k$ such that for every $S\subseteq M$ we have $v(S)=\max_{1\leq i\leq k} A_i(S)$.} (a generalization of submodular) valuations there exist \emph{static prices} for items that gets a $1/2$-approximation for buyers arriving in an adversarial order. Since this factor cannot be improved for adversarial arrival, this leaves an important open question if we can design better algorithms when the  arrival order  can be controlled. Or ideally, we want to  beat $1/2$  even when the arrival order cannot be controlled but is chosen uniformly at random.

\vspace{0.2cm}
\noindent \textbf{Combinatorial Auctions Prophet Secretary Problem (CAPS):} \emph{Suppose $n$ buyers take XOS valuations for $m$ items from $n$ independent probability distributions. 
The outcome buyer valuations are revealed in a uniformly random order. Whenever a buyer valuations is revealed, the problem is to immediately and irrevocably assign a subset of the remaining items to the buyer. The goal is maximize the sum of the valuations of all the buyers for their assigned subset of items.
}
\vspace{0.2cm}

In Section~\ref{sec:XOSCombAuctions} we improve the \Thomas{online approximation} result of~\cite{feldman2015combinatorial} for random order.

\begin{restatable}{theorem}{CAPS}\label{thm:XOS}
There exists a  $(1-1/e)$-approximation algorithm to CAPS.
\end{restatable}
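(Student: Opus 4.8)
The plan is to carry out, for item prices, the same template that proves Theorem~\ref{thm:matProphetSec}: convert the discrete random-order process into a continuous one, post thresholds that shrink over time, and lower-bound the expected welfare by tracking how fast the optimum ``disappears.'' Concretely, I would first replace the uniformly random arrival order by i.i.d.\ arrival times $t_1,\dots,t_n\sim U[0,1]$ (sorting them reproduces a uniform permutation), so that each buyer's position becomes a continuous parameter independent of all valuations; this is what lets the price schedule and the surviving fraction of $\OPT$ be differentiable in time. Throughout, $\OPT$ refers to the offline optimum for the realized profile: for a profile $\bval$ let $O_i(\bval)$ be the bundle $\OPT$ gives buyer $i$ and let $a_i^{\bval}$ be an XOS additive function supporting $v_i$ on $O_i(\bval)$, so $v_i(O_i(\bval))=\sum_{j\in O_i(\bval)}a_i^{\bval}(j)$ and $v_i(T)\ge\sum_{j\in T}a_i^{\bval}(j)$ for all $T$, and set $q_j:=\E_{\bval}[\sum_i\one[j\in O_i(\bval)]\,a_i^{\bval}(j)]$ so that $\sum_j q_j=\E[\OPT]$.

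The algorithm posts at time $t$ the price $\varphi(t)\,q_j$ on each still-available item $j$, where $\varphi\colon[0,1]\to\mathbb{R}_{\ge0}$ is a decreasing function pinned down later --- the item-price, time-varying analogue of the Kleinberg--Weinberg~\cite{KW-STOC12} balanced prices used statically in~\cite{feldman2015combinatorial}. On arrival at time $t_i$, buyer $i$ takes a utility-maximizing bundle among the available items $A_{t_i}$, i.e.\ $D_i\in\argmax_{S\subseteq A_{t_i}}(v_i(S)-\varphi(t_i)\sum_{j\in S}q_j)$, and the collected $\varphi(t_i)\sum_{j\in D_i}q_j$ goes to revenue. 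As in~\cite{feldman2015combinatorial}, $\Alg=\sum_{j\text{ sold}}\varphi(\tau_j)q_j+\sum_i u_i$ with $u_i\ge0$ the utility of buyer $i$ and $\tau_j$ the time $j$ sells; since buyer $i$ could have bought $O_i(\bval)\cap A_{t_i}$, XOS gives $u_i\ge\sum_{j\in O_i(\bval)\cap A_{t_i}}(a_i^{\bval}(j)-\varphi(t_i)q_j)$. Writing $x_j(t):=\Pr[j\in A_t]$ (nonincreasing, $x_j(0)=1$) and taking expectations, the revenue of item $j$ becomes $q_j\int_0^1\varphi(t)\,(-\mathrm{d}x_j(t))$ and its utility contribution becomes a combination of terms of the form $q_j\int_0^1 x_j(t)\,(1-\varphi(t))\,\mathrm{d}t$, the $\mathrm{d}t$ coming from the uniform arrival time of $j$'s $\OPT$ buyer; so the whole guarantee is a per-item statement in $x_j$ and $\varphi$.

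To close it, I would derive from the dynamics a differential inequality $x_j'(t)\ge-c(t)\,x_j(t)$ --- item $j$ leaves $A_t$ in $[t,t+\mathrm{d}t]$ only when some arriving buyer demands it, at a rate controlled by $\varphi(t)$ and the residual $\OPT$-mass --- plug its worst-case solution into the integrals above, and choose $\varphi$ to balance revenue against utility. This is exactly where random order beats adversarial order: a fixed buyer's arrival time is uniform instead of being pushed to the end, so the probability that her $\OPT$ items are still available is governed by $\int_0^1 x_j(t)\,\mathrm{d}t$ rather than by $x_j(1)$. Optimizing, the ratio $\E[\Alg]/\E[\OPT]$ comes out to $1-1/e$ via the same ODE that appears in the single-item analysis of~\cite{esfandiari2015prophet} and in Theorem~\ref{thm:matProphetSec}; summing the per-item bounds and using $\sum_j q_j=\E[\OPT]$ finishes.

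The main obstacle --- the reason the static analysis of~\cite{feldman2015combinatorial} stops at $1/2$, and the only genuinely delicate part here --- is the correlation between the event ``$j\in O_i(\bval)$'' (a function of the entire profile) and the event ``$j\in A_{t_i}$'' (also a function of the profile, through the demands of earlier buyers). I plan to handle it the way~\cite{feldman2015combinatorial} handles its version: condition on the profile $\bval$ and on $t_i$, take expectations only over the remaining arrival times, so that availability reduces to the clean time-function $x_j$ and decouples from the $\OPT$ assignment. Making this rigorous --- showing the substitution of $x_j(t_i)$ for the true availability does not over-count across buyers, that the demand-oracle tie-breaking keeps each $x_j$ monotone, and that the self-referential differential inequality for $x_j$ actually attains the claimed worst case --- is the technical core; the rest is the accounting of the previous two paragraphs.
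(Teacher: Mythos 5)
Your high-level template is the right one and matches the paper's: continuous arrival times in $[0,1]$, dynamic item prices of the form (time-varying factor)$\times$(base price), split $\Alg$ into revenue plus utility, and choose the time-varying factor so that the unknown availability probabilities $q_j(t)$ drop out after integration by parts. Your base prices $q_j$ (the expected additive contribution of item $j$ to $\OPT$ under a supporting XOS representation) are the same object as the paper's $b_j$, which the paper computes from a configuration LP for running-time reasons but could equally well take from the realized optimum as you do.

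The gap is in the choice of algorithm, and it is not cosmetic. You let each buyer take a utility-maximizing bundle among the \emph{available} items via a demand oracle, as in Feldman et al.~\cite{feldman2015combinatorial}. Your analysis then needs the monotonicity step
\[
\Pr[\text{item $j$ unsold before }t \mid T_i = t] \;\geq\; \Pr[\text{item $j$ unsold before }t] = q_j(t),
\]
i.e.\ that conditioning buyer $i$ to arrive at $t$ (which only \emph{adds} arrivals before any $t' < t$ relative to the unconditioned process) cannot make an item \emph{more} likely to survive. For a demand-oracle algorithm this fails for general XOS: moving a buyer earlier changes what is available to intermediate buyers, which changes the bundles they demand, and a once-sold item can thereby become unsold. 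The paper flags this explicitly (the remark after Theorem~\ref{thm:bipMatching} and after Lemma~\ref{lem:withwithouiBip}): the demand-oracle route is only salvaged for \emph{unit-demand} valuations, where a careful induction (Lemma~\ref{lem:withwithouiBip}) shows $B_{t'}\subseteq A_{t'}$. For general XOS, the paper instead changes the algorithm: each buyer $i$ independently draws a bundle $S_i^*$ from the configuration-LP marginal (depending only on $v_i$, never on availability), and buys each $j\in S_i^*$ that is still unsold and whose supporting additive value clears the threshold. Because the intended purchase set is drawn independently of the history, adding arrivals can only remove items, so the monotonicity is immediate---at the cost of incentive compatibility, which the paper also notes. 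Your plan to ``handle it the way Feldman et al.\ handle their version'' by conditioning on the profile and $t_i$ does not resolve this: in random order the comparison is between two distributions over arrival sequences, and the cascading effect on downstream demand queries is exactly what breaks.

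A secondary, smaller point: the differential inequality $x_j'(t)\geq -c(t)x_j(t)$ and its ``worst-case solution'' are unnecessary and, if pursued, would be hard to establish. The residual framework (Lemma~\ref{lem:residualSuffices}) is designed so that with $\alpha(t)=1-e^{t-1}$ the bound
\[
\sum_j b_j\!\left(-\!\int_0^1\!\alpha(t)\,q_j'(t)\,dt + \int_0^1 (1-\alpha(t))\,q_j(t)\,dt\right)
= \left(1-\tfrac1e\right)\sum_j b_j
\]
holds for \emph{every} nonincreasing $q_j$ with $q_j(0)=1$; you never need to know or bound $q_j(t)$. (The differential bound $q(t)\geq e^{-t}$ does appear in the paper, but only in the fixed-threshold single-item analysis of Section~\ref{section:fixed}, where the threshold is static and the residual trick is unavailable.) Finally, your utility lower bound should carry the positive part, $u_i \geq \sum_{j\in O_i(\bval)\cap A_{t_i}}(a_i^{\bval}(j)-\varphi(t_i)q_j)^+$, before you drop it; as written the right-hand side can be negative, which would not be a valid lower bound on a demand-query utility.
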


\noindent Given access to demand and XOS oracles for stochastic utilities of different buyers, the algorithm in Theorem~\ref{thm:XOS} can be made efficient. This  is  interesting because it matches the best possible $(1-1/e)$-approximation for XOS-welfare maximization  in the offline setting~\cite{DNS-MOR10,Feige-SICOMP09}.

A desirable property in the design of an economically viable mechanism is \emph{incentive-compatibility}. In particular, a buyer is more likely to make decisions about their allocations based on their own personal incentives rather than to accept a given allocation that might optimize the social welfare but not the individuals' profit. For the important case of unit-demand buyers (aka bipartite matching), in Section~\ref{sec:matching} we extend Theorem~\ref{thm:XOS} to additionally obtain this property.


\begin{restatable}{theorem}{bipMatching} \label{thm:bipMatching}
For bipartite matchings, when buyers arrive in a uniformly random order, there exists an incentive-compatible mechanism \Thomas{based on dynamic prices} that gives a $(1-1/e)$-approximation to the optimal welfare.
\end{restatable}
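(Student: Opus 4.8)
The plan is to show that the algorithm underlying Theorem~\ref{thm:XOS}, specialized to unit-demand buyers, is already a dynamic posted-price mechanism, and that every such mechanism is incentive-compatible; the approximation guarantee is then inherited from Theorem~\ref{thm:XOS}. First I would recall the structure of the CAPS algorithm: when buyer $i$ arrives, it maintains a price $p_j$ for every still-available item $j$, where these prices are a deterministic function of the public value distributions, the realized arrival order, and the set of items already allocated to buyers $1,\dots,i-1$ --- in particular they do not depend on buyer $i$'s realized valuation --- and then it assigns to buyer $i$ a demand (utility-maximizing) bundle with respect to $(p_j)_j$. For a unit-demand buyer the demand bundle is simply the single item $j^\star \in \argmax_j \bigl(v_{ij} - p_j\bigr)$ when $v_{ij^\star} - p_{j^\star} > 0$, and the empty bundle otherwise (ties broken by any fixed rule, say lowest index). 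Hence the specialized algorithm \emph{is} the mechanism: post prices on the remaining items, let the arriving buyer take their favorite remaining item at its posted price $p_{j^\star}$, remove that item, and proceed to the next buyer in the random order. The prices are genuinely dynamic, since two buyers may face different prices for the same item; and, unlike the general XOS case, no demand/XOS oracle is needed, so the mechanism is efficient.

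Next I would establish incentive compatibility. Fix all other buyers' reports and the (exogenous) arrival order. The menu of (item, price) pairs offered to buyer $i$ is then fixed independently of $i$'s report, and $i$'s own utility depends only on which item from this menu the mechanism selects on $i$'s behalf. Reporting the true valuation makes the mechanism pick $j^\star$, which by definition maximizes the quasi-linear utility $v_{ij} - p_j$ over all available items together with the outside option of value $0$; any misreport can only lead to a weakly worse choice. Therefore truthful reporting is a dominant strategy, so the mechanism is (dominant-strategy) incentive-compatible, and it is individually rational because the outside option is always available. Note that although $i$'s choice changes the prices faced by later buyers, this affects only their utilities, not $i$'s, and no buyer can influence the arrival order, so there is nothing further to check.

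Finally, the approximation ratio is immediate: unit-demand valuations are a special case of XOS valuations and bipartite matching is the corresponding feasibility constraint, so Theorem~\ref{thm:XOS} applies verbatim and gives $\E[\Alg] \geq (1-1/e)\cdot \E\bigl[\max \text{matching}\bigr]$. The one point that genuinely needs verification --- and the step I expect to be the main (albeit mild) obstacle --- is confirming that the prices charged to buyer $i$ in the algorithm of Theorem~\ref{thm:XOS} are computed \emph{before} $v_i$ is revealed, i.e.\ that the price-setting rule depends only on the distribution of $v_i$ and the current state rather than peeking at the realized $v_i$; if the rule as stated used $v_i$, one would rewrite it to use only the distribution of $v_i$ and re-check that the welfare analysis of Theorem~\ref{thm:XOS} is unaffected (it only uses that each buyer receives a demand bundle at the posted prices, which remains true).
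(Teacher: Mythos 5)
Your plan breaks down at its first step: the algorithm of Theorem~\ref{thm:XOS} is \emph{not} a posted-price mechanism, even when all buyers are unit-demand, and the paper says so explicitly. In the XOS algorithm a buyer $i$ with realized valuation $\val_i^k$ does not get to pick her favourite remaining item at the current prices; instead the algorithm draws a set $S_i^\ast$ from the configuration-LP marginals (with probability $x_{i,S}^k/\Pr[\val_i=\val_i^k]$) and only allows $i$ to take items $j\in S_i^\ast$ with $\val_{i,j}^{k,S_i^\ast}\ge\alpha(t)\baseprice_j$. The paper remarks in Section~\ref{sec:XOSCombAuctions} that ``since we do not allow buyer $i$ to choose items outside set $S_i^\ast$, the mechanism defined by this algorithm need not be incentive compatible.'' So the object you are proposing to specialize simply is not the one you describe, and the IC argument you give (``the mechanism picks the demand bundle, so truthfulness is dominant'') does not apply to it. The concern you flag at the end --- whether prices peek at $v_i$ --- is not the real obstacle; the obstacle is the allocation rule, not the price-setting rule.

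To get Theorem~\ref{thm:bipMatching}, the paper runs a genuinely different mechanism: post dynamic prices $\alpha(t)\baseprice_j$ on the remaining items (with $\baseprice_j$ the expected value of the buyer matched to $j$ in the offline optimum) and let the arriving buyer take whichever remaining item maximizes her utility. This is the posted-price mechanism you had in mind, but now the welfare analysis does not ``apply verbatim'' from Theorem~\ref{thm:XOS}; it has to be redone. Two places change in an essential way. First, the utility lower bound (Lemma~\ref{lem:utilMatching}) no longer has the LP sample $S_i^\ast$ to work with, so the proof instead lower-bounds $i$'s utility by the deviation of grabbing the item $OP\arrivaltime_i(\val_i,\hat{\bval}_{-i})$ that $i$ would be matched to in an offline optimum against a fresh sample. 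Second, and more importantly, the monotonicity step $\Pr[\text{$j$ not sold before $t$}\mid \arrivaltime_i=t]\ge q_j(t)$ (Lemma~\ref{lem:withwithouiBip}) is no longer the trivial ``each buyer's target set is independent'' argument that worked for the XOS algorithm: once buyers react to what is still available, inserting buyer $i$ earlier can cascade. The paper proves this monotonicity by a path-following/exchange argument showing that for unit-demand buyers the set of sold items with the insertion always contains the set without it --- and explicitly notes (right after the theorem statement) that this is exactly where unit-demand is needed, since for general XOS valuations the claim is false. Your proposal is missing both of these ingredients; as written, it reduces the theorem to a claim about an algorithm that is neither IC nor, when replaced by the IC one, covered by the prior analysis.
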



\Thomas{For this result, we require unit-demand buyers. This is because for general XOS functions shifting buyers to earlier arrivals can change the availability of items arbitrarily. For unit-demand functions, we show that this effect is bounded.}

Finally, in Section~\ref{section:fixed} we conclude by showing that for the single-item case one can obtain a $(1-1/e)$-approximation even by using static prices, and that nothing better is possible.

\IGNORE{
\begin{theorem}
	There is no static-pricing mechanism for (single item) prophet secretary that achieves an approximation factor of $(1-1/e+\epsilon)$ for any constant $\epsilon>0$.
\end{theorem}
}


\subsection{Our Techniques}
In this section we discuss our three main ideas for a  combinatorial auction. In this setting, our algorithm is threshold based, which means that we set \emph{dynamic prices} to the items and  allow a buyer to purchase a set of items only if her value is more than the price of that set. This allows us to view  total value as the sum of \emph{utility} of the buyers and the total generated \emph{revenue}. Although powerful, dynamic prices  often lead to  involved calculations and become difficult to analyze beyond a  single item setting~\cite{esfandiari2015prophet,abolhassani2017beating}. 
To overcome this issue, we convert our discrete  problem into a continuous setting. \Thomas{This is possible because} a random permutation of buyers can be  viewed as each buyer arriving at a time chosen uniformly at random between $0$ and $1$. The benefit of such a transformation is that \Thomas{the arrival times are independent, which keeps correlations managable. Besides, it allow us to use tools  from integral calculus such as integration by parts.}

Our algorithm for combinatorial auctions sets a \emph{base price} $b_j$ for every item $j$ based on its contribution to the expected offline optimum $\E[\OPT]$. Our approach is to define two time varying continuous functions: \emph{discount} and \emph{residual}. The {discount} function $\alpha(t)\colon [0,1]\rightarrow [0,1]$ is chosen such that the price of an unsold item $j$ at time $t$ is exactly $\alpha(t) \cdot b_j$.  We define  a \emph{residual} function $r(t) \colon [0, 1] \to \mathbb{R}_{\geq 0}$ that intuitively denotes the expected value remaining in the instance at time $t$. Hence, $r(0)=\E[\OPT]$ and $r(1)=0$. Computing $r(t)$ is difficult   for a combinatorial auction since it depends on several random variables. However, assuming that we know $r(t)$, we use application specific techniques  to compute lower bounds on both the expected revenue and the expected utility in terms of the  functions  $r(t)$ and $\alpha(t)$. 

Finally, although we do not know  $r(t)$, 
we can choose the function $\alpha(t)$ in a way that allows us to simplify the sum of expected revenue and utility, without ever computing $r(t)$ explicitly. This step exploits  properties of the exponential function for integration (see Lemma~\ref{lem:residualSuffices}). 


\IGNORE{
\paragraph{Proof Template}
For every item $j$,  compute $b_j$ that denotes its contribution in the value of the expected offline optimum solution. Define a decreasing function $\alpha(t)\colon [0, 1] \to [0,1]$ and set price of item $j$ at time $t$ to be $\alpha(t) \cdot b_j$. Let  $q_j(t)\colon [0, 1] \to [0,1]$ denote the probability that item $j$ is not sold till time $t$.
\begin{enumerate}
\item Argue that $\E[Revenue] = \int_{t=0}^1 \sum_j q_j'(t) \alpha(t) b_j dt$.
\item Argue that $\E[Utility] \geq \int_{t=0}^1 \sum_j q_j(t) (1- \alpha(t)) b_j dt$. \label{temp:utilBound}
\item Choose $\alpha(t)$ s.t. $\E[Revenue]+E[Utility]$ becomes independent of $q_j(t)$.
\item Show that in combination above properties  imply a competitive ratio of $1 - \frac{1}{e}$.
\end{enumerate}
Given the above template, for both combinatorial auctions and matroids, most of our effort goes in proving Step~\ref{temp:utilBound}.
}

\IGNORE{
\paragraph{Proof Template}
We define a function $r(t) \colon [0, 1] \to \mathbb{R}$, which has the interpretation of ``expected remaining value in the instance at time $t$'', with the following properties:
\begin{enumerate}
\item $r(0)$ is the value of the optimal offline solution.
\item The expected revenue is at least $- \int_{t=0}^1 \alpha(t) r'(t) dt$.
\item The expected sum of buyers' utilities is at least $\int_{t=0}^1 (1 - \alpha(t)) r(t) dt$. \label{temp:utilBound}
\end{enumerate}
To show that these properties imply a $(1 - {1}/{e})$-competitive ratio, we choose $\alpha(t)$ in a manner that makes the sum of the expected revenue and buyers' utilities independent of $r(t)$. This allows us to compute expected welfare, even though we cannot   compute    $r(t)$.

Given the above template, depending on the application, most of our effort  goes in proving Property~\ref{temp:utilBound}. We now illustrate an application of this framework for the single item case, and give an alternate proof of the $(1-1/e)$-prophet secretary inequality  of Esfandiari et al.~\cite{esfandiari2015prophet}.

{\color{blue}
\paragraph{Single Item} \snote{Put single item in the proof template.}
Let $N$ denote a set of $n$ buyers that want to purchase a single item. Suppose buyer $i$ arrives at time $\arrivaltime_i$ chosen uniformly at random between $0$ and $1$. Let $\OPT=\E[Max]$, where $Max$ is the random variable denoting $\max_i\{X_i \}$. Let $\alpha_t \cdot \OPT$ be the item price (threshold) at time $t$. Let $q_t$ denote the probability that the  item has not been purchased till time $t$ (both over randomness of permutation and random values $X$). Now,
\begin{align*}
\E[\Util] &= \sum_i \E[\Util(i)] \\
&= \sum_i \int_{t=0}^1 \Pr[\text{Item not sold by $t$} \mid \arrivaltime_i = t] \cdot \E[(X_i - \alpha_t \cdot \OPT)^+] \cdot dt \\
&\geq \sum_i \int_{t=0}^1 q_t \cdot \E[(X_i - \alpha_t \cdot \OPT)^+] \cdot dt \qquad\qquad(\text{since absence of $i$ only helps})\\
&\geq  \int_{t=0}^1 q_t \cdot (1 - \alpha_t) \cdot \OPT \cdot dt  \qquad \text{(since $\sum_i \E[(X_i - \alpha_t \cdot \OPT)^+] \geq (1-\alpha_t)\OPT$)}
\end{align*}

Note that $-dq_t$ is the probability that the item is bought between $t$ and $t+dt$ ($q_t$ is decreasing), we get
\begin{align*}
\E[\Rev] = - \int_{t=0}^{1} \alpha_t \cdot \OPT \cdot dq_t .
\end{align*}

Let $\alpha_t= 1-e^{t-1}$. Now, 
\begin{align*}
\E[\text{Alg}] &= \E[\Util] + \E[\Rev] \\
&\geq \int_{t=0}^1 q_t \cdot (1 - \alpha_t) \cdot \OPT \cdot dt   - \int_{t=0}^1 \alpha_t \cdot \OPT \cdot dq_t\\
&= \left( \int_{t=0}^1 \left( q_t \cdot e^{t-1}  \cdot dt    +  e^{t-1}  \cdot dq_t \right)  - \int_{t=0}^1 dq_t \right) \cdot \OPT \\
&= \left(  \left[ q_t \cdot e^{t-1} \right]_{t=0}^{1}    - \left[ q_t \right]_{t=0}^{1} \right) \cdot \OPT \\
&= \left((q_1 - 1/e) - (q_1 -1) \right) \cdot \OPT  &\text{(since $q_0=1$)}\\
& =  \left(1-\frac1e \right) \OPT.
\end{align*}
}
}



\subsection{Related Work}
Starting with the works of Krengel-Sucheston~\cite{krengel1978semiamarts,krengel1977semiamarts} and Dynkin~\cite{dynkin1963optimum}, there has been a long line of research on both prophet inequalities and secretary problems. 
One of the first generalizations  is the \textit{multiple-choice prophet inequalities} \cite{kennedy1987prophet,kennedy1985optimal,kertz1986comparison} in which we are allowed to pick $k$ items and the goal is to maximize their sum. Alaei~\cite{alaei2014bayesian} gives an almost tight ($1-{1}/{\sqrt{k+3}}$)-approximation algorithm for this problem (the lower bound is due to~\cite{hajiaghayi2007automated}). Similarly, the  \textit{multiple-choice secretary} problem   was first studied by  Hajiaghayi et al.~\cite{hajiaghayi2004adaptive},  and
 Kleinberg~\cite{kleinberg2005multiple} gives a $(1-O(\sqrt{1/k}))$-approximation algorithm.

The research investigating the relation between prophet inequalities and online auctions is
initiated in~\cite{hajiaghayi2007automated,chawla2009sequential}. This lead to several interesting follow up works for matroids~\cite{yan2011mechanism,KW-STOC12} and matchings~\cite{alaei2012online}. Meanwhile, the connection between secretary problems and online auctions is first explored in Hajiaghayi et al.~\cite{hajiaghayi2004adaptive}.
Its generalization  to matroids is considered in~\cite{babaioff2007matroids,Lachish-FOCS14,FSZ-SODA15} and to matchings in~\cite{GM-SODA08,KorulaPal-ICALP09,MY-STOC11,KMT-STOC11,kesselheim2013optimal,GS-IPCO17}.

Secretary problems and prophet inequalities have also been studied beyond a matroid/matching. For the intersection of $p$ matroids, Kleinberg and Weinberg~\cite{KW-STOC12} give an $O(p)$-approximation prophet inequality. D{\"u}tting and Kleinberg~\cite{dutting2015polymatroid} extend this result to polymatroids. Feldman et al.~\cite{feldman2015combinatorial} study the generalizations to combinatorial auctions. Later, D{\"u}tting et al.~\cite{DFKL-FOCS17} give a general framework to prove such prophet inequalities.
 Submodular variants of the secretary problem have been considered in~\cite{bateni2010submodular,gupta2010constrained,FZ-FOCS15,KMZ-STOC15}. Prophet  and secretary problems have also been studied for many classical combinatorial problems (see e.g., \cite{Meyerson-FOCS01,garg2008stochastic,gobel2014online,dehghani2015online,dehghani_et_al:LIPIcs:2017:7480}).
Rubinstein~\cite{rubinstein2016beyond} and Rubinstein-Singla~\cite{RS-SODA17} consider these problems for arbitrary downward-closed constraints.  

In the prophet secretary model, Esfandiari et al.~\cite{esfandiari2015prophet} give a $(1-1/e)$-approximation in the special case of a single item. Going beyond $1-1/e$ has been challenging. Only recently, Abolhasani et al.~\cite{abolhassani2017beating} and Correa et al.~\cite{CFHOV-EC17}  improve this  factor for the single item i.i.d. setting. 
Extending this result to non-identical  items or to matroids are interesting open problems.

\IGNORE{
\subsection{Related Work}
In the past several decades, prophet inequality and secretary problem have been attracting the attention of researchers in mathematics and computer science due to their fundamental natures and extensive applications. In the following limited space, we mention only some of the related works with the highest relevance to the work of this paper.

\paragraph{Secretary Problem.} 
In this problem, we receive a sequence of randomly permuted numbers in an online fashion. Every time we observe a new number, we have the option to stop the sequence and select the most recent number. The goal is to maximize the probability of selecting the maximum of all numbers.
The pioneering work of Dynkin~\cite{dynkin1963optimum} presents a simple but elegant algorithm that succeeds with probability $1/e$. In particular, he shows that the best strategy is to skip the first $1/e$ fraction of the numbers and then take the first number that exceeds all its predecessors. Although simple, this algorithm specifies the essence of best strategies for many generalizations of secretary problem.

The connection between secretary problem and online auction mechanisms has been explored by Hajiaghayi-Kleinberg-Parkes~\cite{hajiaghayi2004adaptive} and has brought more attention to this classical problem. In particular, they introduce the multiple-choice value version of the problem, in which the goal is to maximize the expected sum of the selected numbers, and discuss its applications in limited-supply online auctions. Kleinberg~\cite{kleinberg2005multiple} presents a tight $(1-O(\sqrt{1/k}))$-approximation algorithm for this problem. The bipartite matching variant is studied by Kesselheim et al.~\cite{kesselheim2013optimal} for which they give a $1/e$-approximation solution using a generalization of the classical algorithm. Babaioff et al.~\cite{babaioff2007matroids} consider the matroid version and give an $\Omega(1/\log k)$-approximation algorithm when the set of selected items have to be an independent set of a rank $k$ matroid. Other generalizations of secretary problem such as the sobmodular variant have been studied as well~\cite{bateni2010submodular, gupta2010constrained}.

\paragraph{Prophet Inequality.} In prophet inequality, we are initially given $n$ distributions for each of the numbers in the sequence. Then, similar to secretary problem, we observe the numbers one by one, and can stop the sequence at any point and select the most recent observation. The goal is to maximize the ratio between the expected value of the selected number and the expected value of the maximum of the sequence. This problem was first introduced by Krengel-Sucheston~\cite{krengel1978semiamarts,krengel1977semiamarts}, for which they gave a tight $1/2$-approximation algorithm. Later on, \cite{kennedy1987prophet,kennedy1985optimal,kertz1986comparison} consider the multiple-choice variant of the problem in which a selection of $k$ numbers is allowed and the goal is to maximize the ratio between the sum of the selected numbers and the sum of the $k$ maximum numbers. The best result on this topic is due to Alaei~\cite{alaei2014bayesian} which gives a $(1-{1}/{\sqrt{k+3}})$-approximation algorithm. This factor almost matches the lower bound of $1-\Omega(\sqrt{1/k})$ already known from prior work.

The research investigating the relation between prophet inequalities and online auctions was initiated by Hajiaghayi-Kleinberg-Sandholm~\cite{hajiaghayi2007automated} and has led to several interesting follow up works. Motivated by applications in online ad-allocation, Alaei et al.~\cite{alaei2012online} study the bipartite matching variant of prophet inequality and achieve the tight factor of $1/2$. Feldman et al.~\cite{feldman2015combinatorial} study the generalizations of the problem to combinatorial auctions in which there are multiple buyers and items and every buyer, upon her arrival, can select a bundle of available items. Using a posted pricing scheme they achieve the same tight bound of $1/2$. Furthermore, Kleinberg-Weinberg~\cite{KW-STOC12} study the problem when a selection of multiple items is allowed under a given set of matroid feasibility constraints and present a $1/2$-approximation algorithm. Yan \cite{yan2011mechanism} improves this bound to $1-1/e\approx 0.63$ when the arrival order can be determined by the algorithm.



Secretary problem and prophet inequalities have also been studied beyond a matroid or a matching. For the intersection of $p$ matroids, Kleinberg and Weinberg~\cite{KW-STOC12} gave an $O(1/p)$-approximation prophet inequality. Later, Dutting and Kleinberg~\cite{dutting2015polymatroid} extended this result to polymatroids. Rubinstein~\cite{rubinstein2016beyond} and Rubinstein-Singla~\cite{RS-SODA17} consider prophet inequalities and secretary problem for arbitrary downward-closed set system. 
Prophet inequalities have also been studied for many  combinatorial optimization problems (see e.g. \cite{Meyerson-FOCS01,dehghani2015online,dehghani2016stochastic,garg2008stochastic,gobel2014online}).

\paragraph{Prophet Secretary.} Recently Esfandiari et al.~\cite{esfandiari2015prophet} introduced a natural combination of the above fundamental problems. In particular, in the \textit{prophet secretary} problem we are initially given $n$ distributions $D_1,\ldots,D_n$ from which $X_1,\ldots,X_n$ are drawn. Then after applying a random permutation $\pi(1),\ldots,\pi(n)$ the values of the items are given to us in an online fashion, i.e. at step $i$ both $\pi(i)$ and $X_{\pi(i)}$ are revealed. The goal is to stop the sequence in a way that maximizes the expected value\footnote{Over all random permutations and draws from distributions} of the most recent item. They present an algorithm that uses different thresholds for different items, and achieves an approximation factor of $1-1/e$ when $n$ goes to infinity.

Beating the factor of $1-1/e\approx 0.63$ for the prophet secretary problems, however, has been challenging. For the  special case of \emph{single item i.i.d.}, Hill and Kertz~\cite{hill1982comparisons} give a characterization of the hardest distribution, and Abolhasani et al.~\cite{abolhassani2017beating} show that one can get a $0.73$-approximation. Recently, this factor has been improved to the tight bound of $0.745$ by \cite{correa2017posted}. Extending this result to non-identical single item or to matroids are interesting open problems.

}

\section{Our Approach using a Residual} \label{section:approach}

In this section, we define a residual  and discuss how it can be used to design an approximation algorithm for a prophet secretary problem. Suppose there are $n$ requests that arrive at times $(\arrivaltime_i)_{i \in [n]}$  drawn i.i.d. from the uniform distribution in $[0,1]$. These requests correspond to buyers of a combinatorial auction or to elements of a matroid.


Whenever a request arrives, we have to decide if and how to serve it. Depending on how we serve   request $i$, say  $x_i$, we gain a certain value $v_i(x_i)$. Our task is to maximize the sum of values over all requests $\sum_{i=1}^n v_i(x_i)$.  Our algorithm $\text{Alg}$ includes a time-dependent payment component. The payment that request $i$ has to make is the product of a time-dependent \emph{discount} function $\alpha(t)$ and a \emph{base price} $\baseprice(x_i)$. \Thomas{The base price depends on the allocation up to this point and how much the new choice limits other allocations in the future. However, it does not depend on $t$, the time that has passed up to this point.} If request $i$ has to pay $p_i(x_i, \arrivaltime_i)=\alpha(\arrivaltime_i, \baseprice(x_i)$ for our decision $x_i$, then its \emph{utility} is given by $u_i = v_i(x_i) - p_i(x_i, \arrivaltime_i)$. We write $\Util = \sum_{i=1}^n u_i$ for the sum of utilities and $\Rev = \sum_{i=1}^n p_i(x_i, \arrivaltime_i)$ for the sum of payments. The value achieved by  $\text{Alg}$ equals $\Util + \Rev$.

Next we define a residual function that has the interpretation of ``expected remaining value in the instance at time $t$". In Lemma~\ref{lem:residualSuffices} we show that the existence of a residual function for $\text{Alg}$ suffices to give a $(1-1/e)$-approximation prophet secretary.

\begin{definition}[Residual] \label{def:residual}
Consider a prophet secretary problem with expected offline value $\E[\OPT]$. For any algorithm $\text{Alg}$ based on a differentiable \emph{discount} function $\alpha(t)\colon [0,1]\rightarrow [0,1]$, a differentiable function $r(t) \colon [0, 1] \to \mathbb{R}_{\geq 0}$ is called a  \emph{residual} if it satisfies the following three conditions for every choice of $\alpha$.
\begin{subequations}
\begin{align}
r(0) & = \E[\OPT] \label{eq:template:startvalue} \\
\E[\Rev] & \geq - \int_{t=0}^1 \alpha(t) \cdot r'(t) \cdot  dt \label{eq:template:revenue} \\
\E[\Util] & \geq \int_{t=0}^1 (1 - \alpha(t))\cdot  r(t)\cdot  dt. \label{eq:template:utility} 
\end{align}
\end{subequations}
\end{definition}

We would like to remark here that this definition is similar in spirit to balanced thresholds \cite{KW-STOC12} and balanced prices \cite{DFKL-FOCS17}. However, it is different because we have to take into account the random arrivals.

As an illustration of Definition~\ref{def:residual}, consider the case of a single item. That is, we are presented a sequence of $n$ real numbers and may select only up to one of them (previously studied in~\cite{esfandiari2015prophet}).

\begin{example}[Single Item]
\label{example:single-item}
Suppose buyer $i \in [n]$ arrives with random value $v_i$ at time $\arrivaltime_i$ chosen uniformly at random between $0$ and $1$. Define $\baseprice = \E[\max_i v_i]$ as the  \emph{base price} of the single item. A buyer arriving at time $t$ is offered the item at price $\alpha(t) \cdot b$, and she accepts the offer if and only if $v_i \geq \alpha(t) \cdot b$. We show that   $r(t) = \Pr[\text{item not sold before $t$}] \cdot b$ is a \emph{residual} function.

By  definition, \eqref{eq:template:startvalue} holds trivially. To see that \eqref{eq:template:revenue} holds, observe that the increase in revenue from time $t$ to time $t+\epsilon$ is approximately $\alpha(t) \cdot b$ if the item is allocated during this time, and is $0$ otherwise. That is, the expected increase in revenue is approximately $\alpha(t) ( r(t) - r(t+\epsilon) )$. Taking the limit for $\epsilon \to 0$ then implies \eqref{eq:template:revenue}, i.e., $\E[\Rev] = - \int_{t=0}^1 \alpha(t) r'(t) dt$.

For \eqref{eq:template:utility}, we consider the expected utility of a buyer $i$ conditioning on her arriving at time $t$
\begin{align*}
\E[u_i \mid \arrivaltime_i = t] & = \E[ \mathbf{1}_\text{item not sold before $t$} \cdot (v_i - \alpha(t) \cdot b)^+ \mid \arrivaltime_i = t]\\
& = \Pr[\text{item not sold before $t$} \mid \arrivaltime_i = t] \cdot \E[(v_i - \alpha(t) \cdot b)^+].
\end{align*}
Here we use that the event  the item is sold before $t$ does not depend on $v_i$ because buyer $i$ only arrives at time $t$. The expectation in turn only depends on $v_i$. It is also important to observe that $\Pr[\text{item not sold before $t$} \mid \arrivaltime_i = t] \geq \Pr[\text{item not sold before $t$}]$.
Next, we take the sum over all buyers $i$ and use that $\E[\sum_{i=1}^n (v_i - \alpha(t) \cdot b)^+] \geq \E[\max_i (v_i - \alpha(t) \cdot b)] = \E[\max_i v_i] - \alpha(t) \cdot b= (1 - \alpha(t)) \cdot b$ to get 
\begin{align*}
\sum_{i=1}^n \E[u_i \mid \arrivaltime_i = t] 
\quad \geq \quad \Pr[\text{item not sold before $t$}] \cdot (1 - \alpha(t)) \cdot b \quad = \quad (1 - \alpha(t)) \cdot r(t).
\end{align*}
This implies
\[
\E[\Util] = \sum_{i=1}^n \int_{t=0}^1 \E[u_i \mid \arrivaltime_i = t] dt = \int_{t=0}^1 \sum_{i=1}^n \E[u_i \mid \arrivaltime_i = t] dt \geq \int_{t=0}^1 (1 - \alpha(t))\cdot r(t) \cdot dt.
\]

\end{example}
%
%
%

We now use the properties of a residual function  to design a $(1 - {1}/{e})$-approximation algorithm. To this end, we choose $\alpha(t)$ in a manner that makes the sum of the expected revenue and buyers' utilities independent of $r(t)$. This allows us to compute expected welfare, even though we cannot compute $r(t)$ directly.


\begin{lemma}\label{lem:residualSuffices}
For a prophet secretary problem, if there exists a residual function $r(t)$ for algorithm $\text{Alg}$ as defined in Definition~\ref{def:residual}, then setting $\alpha(t)= 1-e^{t-1}$ gives  a $(1 - {1}/{e})$-approximation.
\end{lemma}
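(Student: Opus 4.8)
The plan is to combine the three defining properties of a residual, substitute the specific discount $\alpha(t) = 1 - e^{t-1}$, and observe that the resulting integrand collapses into a total derivative, so that the whole expression evaluates to $(1-1/e)\,r(0) = (1-1/e)\,\E[\OPT]$.

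First I would record that $\alpha(t) = 1 - e^{t-1}$ is an admissible discount function in the sense of Definition~\ref{def:residual}: it is differentiable and decreasing, with $\alpha(0) = 1 - 1/e \in (0,1)$ and $\alpha(1) = 0$, hence it maps $[0,1]$ into $[0,1]$. Then, since the value achieved by $\text{Alg}$ is $\Util + \Rev$, I would apply \eqref{eq:template:utility} and \eqref{eq:template:revenue} to get
\[
\E[\Alg] \;=\; \E[\Util] + \E[\Rev] \;\geq\; \int_{t=0}^1 (1-\alpha(t))\, r(t)\, dt \;-\; \int_{t=0}^1 \alpha(t)\, r'(t)\, dt .
\]
With $1 - \alpha(t) = e^{t-1}$ and $\alpha(t) = 1 - e^{t-1}$, I would split the second integral as $-\int_0^1 \alpha(t) r'(t)\, dt = \int_0^1 r'(t)\, dt \cdot(-1) $... more precisely $-\int_0^1 (1 - e^{t-1}) r'(t)\, dt = -\int_0^1 r'(t)\, dt + \int_0^1 e^{t-1} r'(t)\, dt$, so that
\[
\E[\Alg] \;\geq\; \int_{t=0}^1 \big(e^{t-1} r(t) + e^{t-1} r'(t)\big)\, dt \;-\; \int_{t=0}^1 r'(t)\, dt .
\]

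The key observation is that the first integrand equals $\frac{d}{dt}\!\left(e^{t-1} r(t)\right)$, so by the fundamental theorem of calculus the first integral is $r(1) - e^{-1} r(0)$, while the second integral is $r(1) - r(0)$. Subtracting, the $r(1)$ terms cancel and I am left with $\big(1 - e^{-1}\big)\, r(0)$, which by \eqref{eq:template:startvalue} equals $(1 - 1/e)\,\E[\OPT]$; this proves the lemma.

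I do not expect a serious obstacle here: the statement is essentially a calculus identity, and the only points requiring care are (i) checking that $\alpha(t) = 1 - e^{t-1}$ really is a valid discount function, and (ii) the sign bookkeeping when rewriting $-\int \alpha(t) r'(t)\, dt$. It is worth remarking in the write-up that the choice $\alpha(t) = 1 - e^{t-1}$ is precisely the one that makes $e^{t-1}\big(r(t) + r'(t)\big)$ an exact derivative — that is why it is the ``magic'' discount — and that the cancellation of $r(1)$ is automatic, so neither $r(1) = 0$ nor $r \geq 0$ is actually needed for this step.
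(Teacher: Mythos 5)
Your proof is correct and is essentially the same argument as the paper's: both rewrite the revenue bound via integration by parts (you do this implicitly by recognizing $e^{t-1}r(t)+e^{t-1}r'(t)=\frac{d}{dt}(e^{t-1}r(t))$) and observe that the choice $\alpha(t)=1-e^{t-1}$ makes everything collapse to the boundary term $(1-1/e)\,r(0)$. The paper keeps $\alpha$ generic, applies integration by parts, and then solves $1-\alpha(t)+\alpha'(t)=0$ to motivate the choice, whereas you plug in $\alpha$ first and spot the total derivative directly — same computation, slightly different presentation.
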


\begin{proof}
To further simplify Eq.~\eqref{eq:template:revenue}, we observe that applying integration by parts  gives
\[
\int r'(t)\cdot \alpha(t) \cdot dt = r(t)\cdot \alpha(t) - \int r(t) \alpha'(t) \cdot dt.
\]
So in combination
\begin{align}
\E[\Rev] \geq - \left( \left[ r(t) \cdot \alpha(t) \right]_{t=0}^{1} -  \int_{t=0}^{1} r(t) \cdot\alpha'(t) \cdot dt \right).
\label{eq:template:revenue2}
\end{align}
Now adding \eqref{eq:template:revenue2} and \eqref{eq:template:utility} gives,
\begin{align*}
\E[\text{Alg}] &= \E[\Util] + \E[\Rev] \\
& \geq \int_{t=0}^1 r(t) \cdot (1 - \alpha(t))  \cdot dt - \left[ r(t) \alpha(t) \right]_{t=0}^{1} +  \int_{t=0}^{1} r(t)\alpha'(t) \cdot dt\\
&= \int_{t=0}^1 r(t) \cdot (1 - \alpha(t)+\alpha'(t))  \cdot dt - \left[ r(t) \alpha(t) \right]_{t=0}^{1} \enspace .
\end{align*}
Although we do not know $r(t)$ and computing $\int_{t=0}^1 r(t) \cdot (1 - \alpha(t)+\alpha'(t))  \cdot dt$ seems difficult, we have the liberty of selecting the function $\alpha(t)$. By choosing  $\alpha(t)$  satisfying $1 - \alpha(t)+\alpha'(t) = 0$ for all $t$, this integral becomes independent of $r(t)$ and simplifies to $0$. In particular, let $\alpha(t)= 1-e^{t-1}$. This gives,
\begin{align*}
\E[Alg] \quad &\geq \quad - \left[ r(t)\cdot \alpha(t) \right]_{t=0}^{1} \quad \\ &= \quad \left( 1-\frac1e \right) r(0) \quad \\ &= \quad \left( 1-\frac1e \right) \E[\OPT] \enspace . 
\end{align*}
\end{proof}

\section{Prophet Secretary for Combinatorial Auctions} \label{section:scaling}

\IGNORE{
\subsection{Single Item}\label{sec:single}
Suppose there is a single item to be sold and there are $n$ buyers,  where buyer $i$ arrives at time chosen uniformly at random between $0$ and $1$. Let $\OPT=\E[Max]$, where $Max$ is the random variable denoting $\max_i\{X_i \}$. Let $\alpha_t \cdot \OPT$ be the item price (threshold) at time $t$. Let $q_t$ denote the probability that the  item has not been purchased till time $t$ (both over randomness of permutation and random values $X$). Now,
\begin{align*}
\E[\Util] &= \sum_i \E[\Util(i)] \\
&= \sum_i \int_{t=0}^1 \Pr[\text{Item not sold before $t$} \mid \arrivaltime_i = t] \cdot \E[(X_i - \alpha_t \cdot \OPT)^+] \cdot dt \\
&\geq \sum_i \int_{t=0}^1 q_t \cdot \E[(X_i - \alpha_t \cdot \OPT)^+] \cdot dt \qquad\qquad(\text{since absence of $i$ only helps})\\
&\geq  \int_{t=0}^1 q_t \cdot (1 - \alpha_t) \cdot \OPT \cdot dt  \qquad \text{(since $\sum_i \E[(X_i - \alpha_t \cdot \OPT)^+] \geq (1-\alpha_t)\OPT$)}
\end{align*}

Note that $-dq_t$ is the probability that the item is bought between $t$ and $t+dt$ ($q_t$ is decreasing), we get
\begin{align*}
\E[\Rev] = - \int_{t=0}^{1} \alpha_t \cdot \OPT \cdot dq_t .
\end{align*}

Let $\alpha_t= 1-e^{t-1}$. Now, 
\begin{align*}
\E[\text{Alg}] &= \E[\Util] + \E[\Rev] \\
&\geq \int_{t=0}^1 q_t \cdot (1 - \alpha_t) \cdot \OPT \cdot dt   - \int_{t=0}^1 \alpha_t \cdot \OPT \cdot dq_t\\
&= \left( \int_{t=0}^1 \left( q_t \cdot e^{t-1}  \cdot dt    +  e^{t-1}  \cdot dq_t \right)  - \int_{t=0}^1 dq_t \right) \cdot \OPT \\
&= \left(  \left[ q_t \cdot e^{t-1} \right]_{t=0}^{1}    - \left[ q_t \right]_{t=0}^{1} \right) \cdot \OPT \\
&= \left((q_1 - 1/e) - (q_1 -1) \right) \cdot \OPT  &\text{(since $q_0=1$)}\\
& =  \left(1-\frac1e \right) \OPT.
\end{align*}

}





Let $N$ denote a set of $n$ buyers and $M$ denote the set of $m$ indivisible items. Suppose buyer $i$ arrives at a time $\arrivaltime_i$ chosen uniformly at random between $0$ and $1$. Let $\val_i\colon 2^M \to \mathbb{R}_{\geq 0}$ (similarly $\hat{\val}_i)$ denote the random combinatorial valuation function of buyer $i$. \Thomas{In order to ensure polynomial running times, w}e assume that the distribution of $\val_i$ has a polynomial support $ \{\val_i^1,  \val_i^2, \ldots, \}$, where $\sum_k \Pr[\val_i = \val_i^k] = 1$. Note that this assumption only simplifies notation. If we only have sample access to the distributions, then we can replace $\{\val_i^1,  \val_i^2, \ldots, \}$ by an appropriate number of samples. \Thomas{Within our proofs, we will use $\hat{\bval}$ to denote an independent, fresh sample from the distribution.}

By $\barrivaltime$ and $\bval$ (similarly $\hat{\bval}$)  we denote the vector of all the buyer arrival times and valuations, respectively. Also, let $\bval_{-i}$ (similarly $\hat{\bval}_{-i}$) denote valuations of all buyers except buyer $i$. For the special case of single items, we let $\val_{ij}$ denote $\val_i(\{j\})$. Let $q_j(t)$ denote the probability that item $j$ has not been sold before time $t$, where the probability is over valuations $\bval$, arrival times $\barrivaltime$, and any randomness of the algorithm.



\subsection{Bipartite Matching}\label{sec:matching}
In the bipartite matching setting  all buyers are unit-demand, i.e. $\val_i(S) = \max_{j\in S} \val_{ij}$. We can therefore assume that no buyer buys more than one item. We restate our result.

\bipMatching*

To define prices of items, let \emph{base price} $\baseprice_j$ denote the expected value of the buyer that buys item $j$ in the offline welfare maximizing allocation (maximum weight matching).  
Now consider an algorithm that prices item $j$ at  $\alpha(t) \cdot \baseprice_j$  at time $t$ and allows the incoming buyer to pick any of the unsold items; here $\alpha(t)$ is a continuous differentiable discount function.

Consider the function $r(t)=\sum_j q_j(t) \cdot b_j$. Clearly, $r(0) = \E[\OPT]$. Using the following  Lemma~\ref{lem:utilMatching} and Claim~\ref{claim:revMatching}, we prove  that $r$ is a residual function for our algorithm.  Since the algorithm is clearly incentive-compatible,   Lemma~\ref{lem:residualSuffices} implies Theorem~\ref{thm:bipMatching}.

\begin{lemma} \label{lem:utilMatching}
We can lower bound the total expected utility by
\begin{align} 
\E_{\bval,\barrivaltime}[\Util] \geq \sum_{j} \int_{t=0}^1 q_j(t)  \cdot (1 - \alpha(t) ) \cdot \baseprice_j \cdot dt. \label{eq:utilBipMatching}
\end{align}
\end{lemma}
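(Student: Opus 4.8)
The plan is to lift the single-item computation of Example~\ref{example:single-item} to matchings: the quantity $\sum_j q_j(t)(1-\alpha(t))\baseprice_j$ is exactly the ``utility'' term of the residual $r(t)=\sum_j q_j(t)\baseprice_j$, so it suffices to show $\E[\Util\mid\arrivaltime_i=t]$ accumulates correctly. Write $\E[\Util]=\sum_i\int_{0}^{1}\E[u_i\mid\arrivaltime_i=t]\,dt$ and fix a buyer $i$ and a time $t$, conditioning on $\arrivaltime_i=t$. Since buyer $i$ takes no action before time $t$, the set $A_i(t)$ of items still unsold when $i$ arrives is independent of $\val_i$; and because the incoming buyer may purchase the unsold item maximizing her utility, $u_i\ge\max_{j\in A_i(t)}(\val_{ij}-\alpha(t)\baseprice_j)^+$, so in particular $u_i\ge(\val_{ij}-\alpha(t)\baseprice_j)^+\cdot\mathbf{1}_{j\in A_i(t)}$ for every fixed item $j$.

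To pick a good target item for buyer $i$, let $\hat{\bval}$ be a fresh independent sample of all valuations and let $\pi^{(i)}$ be the maximum-weight matching of the instance in which buyer $i$ keeps her true valuation $\val_i$ while every other buyer $i'$ is given the fresh valuation $\hat{\val}_{i'}$. Two facts drive the argument: (i) $\pi^{(i)}$ is a function of $(\val_i,\hat{\bval}_{-i})$, hence independent of $A_i(t)$, which depends only on $\bval_{-i}$, $\barrivaltime_{-i}$ and the algorithm's coins; and (ii) because $\val_i$ and $\hat{\val}_i$ are i.i.d., the valuation profile $(\val_i,\hat{\bval}_{-i})$ has the same law as $\hat{\bval}$, so $(\pi^{(i)}(i),\val_{i,\pi^{(i)}(i)})$ has the same law as $(\pi(i),\hat{\val}_{i,\pi(i)})$, where $\pi$ is the offline optimal matching of $\hat{\bval}$. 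Taking $j=\pi^{(i)}(i)$ in the bound above, using (i) to factor out availability, and then the ``absence of $i$ only helps'' inequality $\Pr[j\in A_i(t)]\ge q_j(t)$ (valid since the bracketed quantity is nonnegative), we get
\[
\E[u_i\mid\arrivaltime_i=t]\;\ge\;\sum_{j}q_j(t)\cdot\E\!\left[\mathbf{1}_{\pi^{(i)}(i)=j}\,(\val_{ij}-\alpha(t)\baseprice_j)^+\right].
\]

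Now sum over $i$ and drop the $(\cdot)^+$ via $x^+\ge x$. By fact (ii), $\sum_i\E[\mathbf{1}_{\pi^{(i)}(i)=j}\val_{ij}]=\sum_i\E[\mathbf{1}_{\pi(i)=j}\hat{\val}_{ij}]=\baseprice_j$, the last equality being the definition of the base price as the expected value of the buyer matched to $j$; and $\sum_i\Pr[\pi^{(i)}(i)=j]=\sum_i\Pr[\pi(i)=j]=\Pr[j\text{ is matched by }\pi]\le1$. Hence $\sum_i\E[\mathbf{1}_{\pi^{(i)}(i)=j}(\val_{ij}-\alpha(t)\baseprice_j)^+]\ge\baseprice_j-\alpha(t)\baseprice_j=(1-\alpha(t))\baseprice_j$, so $\sum_i\E[u_i\mid\arrivaltime_i=t]\ge\sum_j q_j(t)(1-\alpha(t))\baseprice_j$; integrating over $t\in[0,1]$ gives \eqref{eq:utilBipMatching}.

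The only step that is not bookkeeping — and the only place unit-demandness enters — is the monotonicity $\Pr[j\in A_i(t)]\ge q_j(t)$: conditioning on $\arrivaltime_i=t$ effectively deletes buyer $i$ from the interval $[0,t)$, and I need that removing one buyer from the greedy allocation process can only increase the chance that a given item is still unsold at time $t$. For a single item this is trivial; for matchings it requires an exchange/augmenting-path argument showing the set of sold items shrinks pathwise when a buyer is deleted, and this is precisely the ``bounded effect'' that fails for general XOS valuations (where shifting a buyer earlier can change availability arbitrarily). I expect this monotonicity claim to be the main obstacle and would isolate it as a separate lemma about the allocation process.
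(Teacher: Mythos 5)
Your proof is correct and follows essentially the same route as the paper's: starting from the ``buy the item that $i$ would receive in an offline optimum on $(\val_i,\hat{\bval}_{-i})$'' lower bound, factoring availability from valuation via independence, invoking the monotonicity $\Pr[j\text{ unsold at }t\mid\arrivaltime_i=t]\ge q_j(t)$ (the paper's Lemma~\ref{lem:withwithouiBip}, which it likewise isolates and proves by the pathwise set-containment argument you anticipate), using the i.i.d.\ swap between $\val_i$ and $\hat{\val}_i$, and finally summing over $i$ with $\sum_i\E[\mathbf 1_{\pi(i)=j}\hat\val_{ij}]=\baseprice_j$ and $\sum_i\Pr[\pi(i)=j]\le1$. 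The only presentational difference is that you track the fresh sample and the independence bookkeeping a bit more explicitly, but the decomposition and key lemma are identical.
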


\begin{proof}
Since   buyer $i$ arriving at time $t$ can pick any of the unsold items, we have
\begin{align*}
\E_{\bval,\barrivaltime}[u_i \mid \arrivaltime_i = t] = \E_{\bval} \left[ \max_{j} \mathbf{1}_{\text{$j$ not sold before $t$}} \cdot \left(\val_{i,j} - \alpha(t) \cdot \baseprice_j\right)^+ \growingmid \arrivaltime_i = t \right].
\end{align*}
One particular choice of buyer $i$ is to choose item $OP\arrivaltime_i(\val_i, \hat{\bval}_{-i})$ if it is still available, and no item otherwise. This gives us a lower bound of
\begin{align*}
\E_{\bval,\barrivaltime}[u_i \mid \arrivaltime_i = t] & \geq \E_{\bval} \left[ \mathbf{1}_{\text{$OP\arrivaltime_i(\val_i, \hat{\bval}_{-i})$ not sold before $t$}} \cdot \left(\val_{i, OP\arrivaltime_i(\val_i, \hat{\bval}_{-i})} - \alpha(t) \cdot \baseprice_j\right)^+ \growingmid \arrivaltime_i = t \right] \\
&= \sum_{j} \E_{\bval, \hat{\bval}} \left[ \mathbf{1}_{\text{$j$ not sold before $t$}} \cdot \mathbf{1}_{j = OP\arrivaltime_i(\val_i, \hat{\bval}_{-i})} \cdot \left(\val_{i,j} - \alpha(t) \cdot \baseprice_j\right)^+ \growingmid \arrivaltime_i = t \right].
\end{align*}
Note that in the product, the fact whether $j$ is sold before $t$ only depends on $\bval_{-i}$ and the arrival times of the other buyers. It does not depend on $\val_i$ or $\hat{\bval}$. The remaining terms, in contrast, only depend on $\val_i$ and $\hat{\bval}_{-i}$. Therefore, we can use independence to split up the expectation and get
\begin{align*}
& \E_{\bval,\barrivaltime}[u_i \mid \arrivaltime_i = t] \\
& \geq \sum_{j} \Pr[\text{$j$ not sold before $t$} \mid \arrivaltime_i = t] \cdot \E_{\val_i, \hat{\bval}_{-i}} \left[ \mathbf{1}_{j = OP\arrivaltime_i(\val_i, \hat{\bval}_{-i})} \cdot (\val_{i,j} - \alpha(t) \cdot \baseprice_j)^+ \growingmid \arrivaltime_i = t \right].
\end{align*}
Next, we use that $\Pr[\text{$j$ not sold before $t$} \mid \arrivaltime_i = t] \geq q_j(t)$ by Lemma~\ref{lem:withwithouiBip} and that $\val_i$ and $\hat{\val}_i$ are identically distributed. Therefore, we can swap their roles inside the expectation. Overall, this gives us
\begin{align}
\E_{\bval,\barrivaltime}[u_i \mid \arrivaltime_i = t] 
&\geq \sum_{j} q_j(t) \cdot \E_{\hat{\bval}} \left[ \mathbf{1}_{j = OP\arrivaltime_i(\hat{\bval})} \cdot (\hat{\val}_{i, j} - \alpha(t) \cdot \baseprice_j) \right].
\label{eq:matching:singleutility}
\end{align}
Next, observe that $\E_{\hat{\bval}}[\sum_i \mathbf{1}_{j = OP\arrivaltime_i(\hat{\bval})} \cdot \hat{\val}_{i, j}] = \baseprice_j$ by the definition of $\baseprice_j$. Therefore, using linearity of expectation, summing up \eqref{eq:matching:singleutility} over all buyers $i$ gives us
\[
 \sum_i \E_{\bval,\barrivaltime}[u_i \mid \arrivaltime_i = t] \geq q_j(t)  \cdot (1 - \alpha(t) ) \cdot \baseprice_j.
\]

\noindent Now, taking the expectation over $t$, we get
\begin{align*}
\E_{\bval,\barrivaltime}\left[\sum_i u_i\right] &= \sum_i \int_{t=0}^1 \E_{\bval,\barrivaltime}[u_i \mid \arrivaltime_i = t] \cdot dt \\ &= \int_{t=0}^1 \sum_i \E_{\bval,\barrivaltime}[u_i \mid \arrivaltime_i = t] \cdot dt  \\
& \geq \int_{t=0}^1 \sum_{j} q_j(t)  \cdot (1 - \alpha(t) ) \cdot \baseprice_j \cdot dt \\ &= \sum_{j} \int_{t=0}^1 q_j(t)  \cdot (1 - \alpha(t) ) \cdot \baseprice_j \cdot dt \enspace . 
\end{align*}
\end{proof}

We next give a bound on the revenue generated by our algorithm.
\begin{claim} \label{claim:revMatching}
 We can  bound the total expected revenue by
\begin{align} \label{eq:bipMatchRev}
	\E_{\bval,\barrivaltime}[\Rev] =   - \sum_j \int_{t=0}^{1} q'_j(t) \alpha(t) \cdot \baseprice_j \cdot dt.
\end{align}
\end{claim}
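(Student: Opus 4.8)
The plan is to mirror the revenue computation already carried out in Example~\ref{example:single-item}, but now performed separately for each item $j$ and then summed. Since all buyers are unit-demand and each item is assigned irrevocably, every item $j \in M$ is sold at most once. Write $\Rev = \sum_j \Rev_j$, where $\Rev_j$ denotes the payment our algorithm collects for item $j$. If item $j$ is ever sold, it is sold to exactly one buyer, namely the buyer $i$ who picks it, and by definition of the algorithm that buyer pays $\alpha(\arrivaltime_i)\cdot \baseprice_j$ at her arrival time $\arrivaltime_i$. Hence, letting $S_j \in [0,1]$ be the (random) time at which $j$ is sold, we have $\Rev_j = \mathbf{1}_{\text{$j$ sold}} \cdot \alpha(S_j)\cdot \baseprice_j$, and the discount $\alpha$ and base price $\baseprice_j$ are deterministic.

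Next I would identify the law of $S_j$. By definition $q_j(t)=\Pr[\text{$j$ not sold before $t$}]$, so on the event that $j$ is eventually sold, $S_j$ has cumulative distribution $1-q_j(t)$ on $[0,1]$, while the residual mass $q_j(1)=\Pr[\text{$j$ never sold}]$ contributes nothing to revenue. Because the arrival times $\barrivaltime$ are drawn from a continuous distribution, $q_j$ is differentiable (this is in any case implicitly required, since $r(t)=\sum_j q_j(t)\,\baseprice_j$ must be differentiable to serve as a residual), so the sale time has density $-q_j'(t)$ on $[0,1]$. Taking expectations then gives
\begin{align*}
\E_{\bval,\barrivaltime}[\Rev_j] \;=\; \baseprice_j \int_{t=0}^1 \alpha(t)\, d\bigl(1-q_j(t)\bigr) \;=\; -\int_{t=0}^{1} q'_j(t)\,\alpha(t)\cdot \baseprice_j\cdot dt ,
\end{align*}
and summing over $j\in M$ by linearity of expectation yields \eqref{eq:bipMatchRev}.

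The only point requiring slight care — and it is really the sole non-routine step — is the passage between the discrete "which buyer buys $j$" description and the continuous integral: one conditions on the disjoint events that $j$ is sold to buyer $i$ with $\arrivaltime_i \in [t,t+dt]$, observes that on each such event the payment is exactly $\alpha(t)\,\baseprice_j$ irrespective of $i$, and sums the probabilities of these events, which aggregate to $q_j(t)-q_j(t+dt) = -q_j'(t)\,dt$. This is the same bookkeeping as in the single-item example, just indexed by the item, and I do not anticipate any genuine obstacle beyond it; in particular, unlike the utility bound of Lemma~\ref{lem:utilMatching}, the revenue statement is an exact equality because the price paid upon a sale is completely determined by the sale time.
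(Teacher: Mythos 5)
Your proof is correct and takes essentially the same approach as the paper's: decompose revenue by item, identify $-q_j'(t)\,dt$ as the probability that item $j$ is sold in $[t,t+dt]$, note the payment at that instant is $\alpha(t)\,\baseprice_j$, and integrate. You have simply made explicit the bookkeeping that the paper's one-line proof leaves implicit.
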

\begin{proof}
Since $-q'_j(t)  dt$ is the  probability that item $j$ is bought between $t$ and $t+dt$ (note $q_j(t)$ is decreasing in $t$), we have
\begin{align*}
\E[\Rev] &= - \sum_j \int_{t=0}^{1} q'_j(t) \alpha(t) \cdot \baseprice_j \cdot dt \enspace . 
\end{align*}
\end{proof}

Finally, we prove the missing lemma that removes the conditioning on the arrival time. 

\begin{lemma}\label{lem:withwithouiBip} We have
\begin{align*} 
\E_{\bval_{-i}} \left[ \Pr_{\barrivaltime}[\text{$j$ not sold before $t$} \mid \arrivaltime_i = t] \right] \geq q_j(t).
\end{align*}
\end{lemma}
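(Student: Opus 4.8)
Conditioned on $\arrivaltime_i = t$, buyer $i$ takes no action strictly before time $t$, and the mechanism uses no internal randomness, so whether item $j$ is sold before $t$ is determined entirely by $\bval_{-i}$ together with the arrival times $(\arrivaltime_k)_{k \neq i}$ of the other buyers --- buyer $i$ might as well be absent. Thus $\Pr_{\barrivaltime}[\text{$j$ not sold before $t$} \mid \arrivaltime_i = t]$ equals the probability that $j$ is unsold before $t$ in the instance obtained by deleting buyer $i$ altogether, and the inequality we want is equivalent to the statement that deleting one buyer can only make it \emph{more} likely that $j$ is still unsold before $t$. The plan is to prove this in the strong, pointwise sense --- for every fixed realization of valuations and arrival times --- and then take expectations.

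\textbf{Monotonicity lemma.} The heart of the argument is the following deterministic claim about the posted-price mechanism, in which every arriving buyer $k$ buys an available item $j'$ maximizing her utility $\val_{k,j'} - \alpha(\arrivaltime_k)\,\baseprice_{j'}$ (ties broken by a fixed rule such as smallest index) and buys nothing if no available item yields positive utility: if $\sigma'$ is obtained from a buyer sequence $\sigma$ by inserting one additional buyer at an arbitrary position, then the set of items sold by the mechanism on $\sigma'$ contains the set sold on $\sigma$. I would prove this by induction along $\sigma'$, carrying the invariant that after each buyer has been processed the set of items still available in the $\sigma'$-run is contained in the set available in the $\sigma$-run. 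Before the inserted buyer the two runs coincide; processing the inserted buyer only shrinks availability; and for each subsequent buyer $k$ one checks the invariant is maintained by a short case analysis, the key point being that whenever the item bought by $k$ in the $\sigma$-run is still available in the $\sigma'$-run it is also the utility-maximizer over the (smaller) available set there, so $k$ buys exactly that item --- hence the two runs remove comparable items and the containment persists.

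\textbf{Conclusion.} Applying the monotonicity lemma to the sub-sequence of buyers arriving strictly before $t$, with buyer $i$ as the inserted buyer, gives for every realization of $\bval$ and $\barrivaltime$ that $\one_{\text{$j$ not sold before $t$}} \le \one_{\text{$j$ not sold before $t$ in the instance without buyer $i$}}$, where the indicator on the right depends only on $\bval_{-i}$ and $(\arrivaltime_k)_{k\neq i}$. Taking expectations --- the left side over all of $\bval$ and $\barrivaltime$, the right side over $\bval_{-i}$ and $(\arrivaltime_k)_{k \neq i}$ --- yields
\[
q_j(t) = \E_{\bval,\barrivaltime}\!\left[\one_{\text{$j$ not sold before $t$}}\right] \;\le\; \E_{\bval_{-i}}\!\left[\Pr_{\barrivaltime}\!\left[\text{$j$ not sold before $t$} \mid \arrivaltime_i = t\right]\right],
\]
which is exactly the stated bound. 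I expect the monotonicity lemma to be the main obstacle: deleting the item bought by the inserted buyer can trigger a cascade of changed choices among later buyers, and one must rule out that this cascade leaves some \emph{other} item unsold that would otherwise have been sold; the nested-availability invariant together with consistent tie-breaking is precisely what closes this gap.
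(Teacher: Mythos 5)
Your proof is correct and follows essentially the same route as the paper's: both establish a pointwise coupling showing that, for every fixed realization of values and arrival times, removing buyer $i$ from the prefix before time $t$ (the paper replaces $\arrivaltime_i$ by $t$; you delete $i$ outright, which has the same effect on the state before $t$) can only shrink the set of items sold before $t$, and then take expectations. The paper proves the nested-sold-sets invariant by deriving a contradiction at the first time containment could fail, while you run an induction along the arrival sequence --- the same case analysis and the same consistent-tie-breaking hypothesis underlie both.
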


\Thomas{The idea is that if buyers arrive earlier in the process, this only \emph{reduces} the available items. It can never happen that such a change makes an item available at a later point. For a single item this is trivial, for multiple items and other combinatorial valuations it does not necessarily hold.}

\ifFULL
\begin{proof}
Consider the execution of our algorithm on two sequences that only differ in the arrival time of buyer $i$. To this end, let $\bval$ be arbitrary values and $\barrivaltime$ be arbitrary arrival times. Let $A_{t'}$ be the set of items that are sold before time $t'$ on the sequence defined by $\bval$ and $\barrivaltime$. Furthermore, let $B_{t'}$ be the set of items sold before time $t'$ if we replace $\arrivaltime_i$ by $t$. Ties are broken in the same way in both sequences.

We claim that $B_{t'} \subseteq A_{t'}$ for all $t' \leq t$.

To this end, we observe that by definition $B_{t'} = A_{t'}$ for $t' \leq \min\{ \arrivaltime_i, t \}$ because the two sequences are identical before $\min\{ \arrivaltime_i, t \}$. This already shows the claim for $\arrivaltime_i \geq t$. Otherwise, assume that there is some $t' \leq t$ for which $B_{t'} \not\subseteq A_{t'}$. Let $t_{\inf}$ be the infimum among these $t'$. It has to hold that some buyer $i'$ arrives at time $t_{\inf}$ and buys item $j_A \not\in A_{t_{\inf}}$ in the original sequence and $j_B \not\in B_{t_{\inf}}$ in the modified sequence. Furthermore, we now have to have $B_{t_{\inf}} \not\subseteq A_{t_{\inf}}$ because $t_{\inf}$ was defined to be the infimum of all $t'$ for which $B_{t'} \subseteq A_{t'}$ is not fulfilled. Therefore, $j_A \not\in B_{t_{\inf}}$. Additionally, $j_B \not\in A_{t_{\inf}}$. The reason is that for any $t' < t_{\inf}$ before the next arrival $B_{t'} = B_{t_{\inf}} \cup \{ j_B \}$.

Overall this means that in both sequences at time $t_{\inf}$ buyer $i'$ has the choice between $j_B$ and $j_A$. As his values are identical and ties are broken the same way, it has to hold that $j_B = j_A$, which then contradicts that $B_{t_{\inf}} \not\subseteq A_{t_{\inf}}$.

Taking the expectation over both $\bval$ and $\barrivaltime$, we get
\[
\Pr_{\barrivaltime, \bval}[j \not\in A_t] \leq \Pr_{\barrivaltime, \bval}[j \not\in B_t].
\]
This implies the Lemma~\ref{lem:withwithouiBip} because
\[
\Pr_{\barrivaltime, \bval}[\text{$j$ not sold before $t$}] = \Pr_{\barrivaltime, \bval}[j \not\in A_t]
\]
\[
\Pr_{\barrivaltime, \bval}[\text{$j$ not sold before $t$} \mid \arrivaltime_i = t] = \Pr_{\barrivaltime, \bval}[j \not\in B_t].
\]
\end{proof}
\else
We present a proof of Lemma~\ref{lem:withwithouiBip} in the full version.
\fi



\subsection{XOS Combinatorial Auctions}\label{sec:XOSCombAuctions}
In this section we prove our main result (restated below) for combinatorial auctions.
\CAPS*

Recollect that the random valuation $\val_i$ of every buyer $i$ has a polynomial support. We can therefore write the following  expectation-version of the configuration LP, which gives us an upper bound on the expected offline social welfare.
\begin{align*}
\max & \sum_i \sum_k \sum_S \val_i^k(S) \cdot x_{i, S}^k \\
\text{s.t. } & \sum_i \sum_k \sum_{S: j \in S} x_{i, S}^k \quad = \quad 1 && \text{ for all $j \in M$}\\
& \sum_S x_{i, S}^k \quad = \quad \Pr[\val_i = \val_i^k] && \text{ for all $i, k$}
\end{align*}

The above configuration LP can be solved with a polynomial number of calls to  demand oracles of buyer valuations (see~\cite{DNS-MOR10}).
Since all functions $\val_i^k$ are XOS, there exist additive supporting valuations; that is, there exist numbers $\val_{i, j}^{k, S} \geq 0$ s.t. $\val_{i, j}^{k, S} = 0$ for $j \not\in S$, $\sum_{j \in S} \val_{i, j}^{k, S} = \val_i^k(S)$\Thomas{, and $\sum_{j \in S'} \val_{i, j}^{k, S} \leq \val_i^k(S')$ for all $S'$}. Before describing our algorithm, we define a \emph{base price} for every item.

\begin{definition}
The \emph{base price} $\baseprice_j$ of every item $j\in M$ is $\sum_{i,k} \sum_{S:j\in S} \val_{i, j}^{k, S} x_{i,S}^k$.
\end{definition}

Since $\sum_S x_{i, S}^k = \Pr[\val_i = \val_i^k]$, consider an algorithm that on arrival of buyer $i$ with valuation $\val_i^k$ draws an independent random set $S$ with probability ${x_{i, S}^k}/{ \Pr[\val_i = \val_i^k]}$. Let $S_i^\ast$ denote this drawn set. 
This distribution also satisfies  that for every item $j$,
\begin{align} \label{eq:XOSBasePriceAltForm}
\sum_i \E_{\val_i,S_i^\ast}\left[ \mathbf{1}_{j\in S_i^\ast} \cdot \val_{i, j}^{k, S_i^\ast} \right] \quad = \quad \sum_{i,k} \Pr[\val_i = \val_i^k] \cdot \sum_{S: j\in S} \frac{ x_{i, S}^k }{\Pr[\val_i = \val_i^k]} \cdot \val_{i,j}^{ k,S_i^\ast } \quad = \quad  \baseprice_j. 
\end{align}


\noindent Now consider the supporting additive valuation for $S_i^\ast$ in the XOS valuation function $\val_i^k$ of buyer  $i$. This can be found using the XOS oracle for $\val_i^k$~\cite{DNS-MOR10}.
Our algorithm assigns her every item $j$ \Thomas{that has not been allocated so far and} for which $\val_{i, j}^{k, S_i^\ast} \geq \alpha(t) \cdot \baseprice_j$, where $\alpha(t)$ is a continuous differentiable function of $t$.  Note that since we do not allow buyer $i$ to choose items outside set $S_i^\ast$, the mechanism defined by this algorithm need not be incentive compatible.

Consider the function $r(t)=\sum_j q_j(t) \cdot b_j$, where again $q_j(t)$ denotes the probability that item $j$ has not been sold before time $t$. Clearly, $r(0) = OPT$. Using the following  Lemma~\ref{lem:utilXOS} and Claim~\ref{claim:revXOS}, we prove that  $r$ is a residual function  for our algorithm. Hence,   Lemma~\ref{lem:residualSuffices} implies Theorem~\ref{thm:XOS}.

\begin{lemma} \label{lem:utilXOS}
The expected utility of the above algorithm is lower bounded by
\begin{align} 
\E_{\bval,\barrivaltime}[\Util] \geq \sum_{j} \int_{t=0}^1 q_j(t)  \cdot (1 - \alpha(t) ) \cdot \baseprice_j \cdot dt. \label{eq:utilXOS}
\end{align}
\end{lemma}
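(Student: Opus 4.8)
The plan is to mimic the proof of Lemma~\ref{lem:utilMatching}: fix a buyer $i$, condition on $\arrivaltime_i=t$, and lower‑bound $u_i$ by the utility obtained from the items of $S_i^\ast$ that both survive and clear the current prices. Concretely, if buyer $i$ has valuation $\val_i^k$ and draws $S_i^\ast$, the algorithm assigns her the set $R := \{j\in S_i^\ast : \val_{i,j}^{k,S_i^\ast}\geq\alpha(t)\cdot\baseprice_j \text{ and } j\text{ not sold before }t\}$, for which she pays $\alpha(t)\sum_{j\in R}\baseprice_j$. The XOS supporting valuation of $S_i^\ast$ gives $\val_i^k(R)\geq\sum_{j\in R}\val_{i,j}^{k,S_i^\ast}$, so
\[
u_i \;\geq\; \sum_{j\in R}\bigl(\val_{i,j}^{k,S_i^\ast}-\alpha(t)\cdot\baseprice_j\bigr) \;=\; \sum_{j\,:\,j\text{ not sold before }t}\bigl(\val_{i,j}^{k,S_i^\ast}-\alpha(t)\cdot\baseprice_j\bigr)^{+},
\]
where extending the sum is harmless because every added term is zero (the items of $S_i^\ast$ that are too expensive are clamped by $(\cdot)^{+}$, and for $j\notin S_i^\ast$ we have $\val_{i,j}^{k,S_i^\ast}=0$).

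Next I would take the expectation over $(\val_i,S_i^\ast)$ and decouple it from the survival event. Since buyer $i$ acts exactly at time $t$, the event ``$j$ not sold before $t$'' depends only on the other buyers and is independent of $(\val_i,S_i^\ast)$ given $\arrivaltime_i=t$; hence
\[
\E[u_i\mid\arrivaltime_i=t]\;\geq\;\sum_j \Pr[j\text{ not sold before }t\mid\arrivaltime_i=t]\cdot\E_{\val_i,S_i^\ast}\bigl[(\val_{i,j}^{k,S_i^\ast}-\alpha(t)\baseprice_j)^{+}\bigr].
\]
Then I would replace the conditional survival probability by $q_j(t)$ via the inequality $\Pr[j\text{ not sold before }t\mid\arrivaltime_i=t]\geq q_j(t)$, proved exactly as Lemma~\ref{lem:withwithouiBip}: for fixed values, internal randomness, and other arrival times, moving $\arrivaltime_i$ to $t$ can only shrink the set of items sold before $t$, because the algorithm hands each arriving buyer \emph{all} affordable available items of her set, so enlarging a buyer's available set only enlarges what she takes, and the inclusion propagates along the sequence (this is precisely the monotonicity that fails for the free-choice mechanism of Section~\ref{sec:matching} but holds for this fixed allocation rule).

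Finally, summing over $i$ is bookkeeping against the configuration LP. Using $(\val_{i,j}^{k,S_i^\ast}-\alpha(t)\baseprice_j)^{+}\geq \val_{i,j}^{k,S_i^\ast}-\alpha(t)\baseprice_j\,\mathbf{1}_{j\in S_i^\ast}$, the base-price identity $\sum_i\E[\mathbf{1}_{j\in S_i^\ast}\val_{i,j}^{k,S_i^\ast}]=\baseprice_j$ from \eqref{eq:XOSBasePriceAltForm}, and the LP constraint $\sum_i\sum_k\sum_{S: j\in S}x_{i,S}^k=1$, which says $\sum_i\Pr[j\in S_i^\ast]=1$, I get $\sum_i\E_{\val_i,S_i^\ast}[(\val_{i,j}^{k,S_i^\ast}-\alpha(t)\baseprice_j)^{+}]\geq(1-\alpha(t))\baseprice_j$. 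Therefore $\sum_i\E[u_i\mid\arrivaltime_i=t]\geq\sum_j q_j(t)(1-\alpha(t))\baseprice_j$, and integrating over $t\in[0,1]$ while swapping the integral with the sum over $i$ gives \eqref{eq:utilXOS}.

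The step I expect to be the real obstacle is the middle one --- showing $\Pr[j\text{ not sold before }t\mid\arrivaltime_i=t]\geq q_j(t)$. The XOS supporting-price trick and the LP bookkeeping are routine, but this survival-monotonicity requires the delicate ``take the infimum of the first time the inclusion is violated and derive a contradiction'' argument of Lemma~\ref{lem:withwithouiBip}, now adapted to the rule ``assign every affordable available item of $S_i^\ast$''; it is worth stressing that this monotonicity is exactly the feature distinguishing this algorithm from one in which buyers pick their favorite bundle.
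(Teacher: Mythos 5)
Your proposal is correct and follows the paper's route: lower-bound $u_i$ via the supporting additive valuation of $S_i^\ast$, decouple the survival event from $(\val_i, S_i^\ast)$ by independence, pass to $q_j(t)$, and close with the configuration-LP identities $\sum_i \E[\mathbf{1}_{j\in S_i^\ast}\val_{i,j}^{k,S_i^\ast}]=\baseprice_j$ and $\sum_i\Pr[j\in S_i^\ast]=1$. You are also more careful than the paper's text on one point: the paper writes the first displayed line as an equality, when strictly it should be a $\geq$ for exactly the XOS reason you cite ($\val_i^k(R)\geq\sum_{j\in R}\val_{i,j}^{k,S_i^\ast}$ for $R\subseteq S_i^\ast$).

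The one place you miscalibrate is the survival-monotonicity step, which you flag as the ``real obstacle'' and expect to need the infimum-of-first-violation argument from Lemma~\ref{lem:withwithouiBip}. In fact, for this algorithm it is a one-line observation, and the paper treats it as such: because the set of items buyer $i'$ attempts to buy, namely $W_{i'}=\{j\in S_{i'}^\ast:\val_{i',j}^{k,S_{i'}^\ast}\geq\alpha(\arrivaltime_{i'})\baseprice_j\}$, does not depend on what is already sold, the set of items sold by any time $s$ is simply $\bigcup_{i':\arrivaltime_{i'}<s}W_{i'}$. This is monotone in the set of buyers who have arrived, so moving $i$'s arrival to exactly $t$ (equivalently, deleting $i$'s contribution before $t$) can only shrink the sold set; no inductive contradiction or infimum is required. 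The infimum argument in Lemma~\ref{lem:withwithouiBip} is needed precisely because in the unit-demand mechanism a buyer's chosen item \emph{does} depend on availability, so the sold set is not a history-independent union. Your own phrase ``the inclusion propagates along the sequence'' is the right intuition; you just overstate how much machinery is needed to make it rigorous here.
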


\begin{proof} Given that buyer $i$ arrives at $t$ and only buys item $j$  if $\val_{i, j}^{k, S_i^\ast} \geq \alpha(t) \cdot \baseprice_j$, her utility is
\begin{align*}
\E_{\bval,\barrivaltime}[u_i \mid \arrivaltime_i = t] &= \sum_{j} \E_{\bval,\barrivaltime,S_i^\ast} \left[ \mathbf{1}_{\text{$j$ not sold by $t$}} \cdot \mathbf{1}_{j \in S_i^\ast} \cdot \left(\val_{i, j}^{k,S_i^\ast} - \alpha(t) \cdot \baseprice_j\right)^+ \growingmid \arrivaltime_i = t \right] 
\end{align*}
Using the fact that whether $j$ is sold before $t$ only depends on $\bval_{-i}$ and $\barrivaltime$, and not on $\val_i$ or $S_i^\ast$, 
\begin{align*}
\E_{\bval,\barrivaltime}[u_i \mid \arrivaltime_i = t]  
&= \sum_{j} \Pr_{\bval_{-i},\barrivaltime}[\text{$j$ not sold by $t$} \mid \arrivaltime_i = t] \cdot \E_{\val_i,S_i^\ast}\left[ \mathbf{1}_{j \in S_i^\ast} \cdot \left(\val_{i, j}^{k,S_i^\ast} - \alpha(t) \cdot \baseprice_j\right)^+   \right]. 
\end{align*}
Now, observe that in our algorithm every buyer $i$ independently decides which set of items $S_i^\ast$ it will attempt to buy.  Crucially, the probability of an item $j$ being sold by time $t$ can only increase if more buyers arrive before $t$.  Therefore, 
\[ \Pr_{\bval_{-i},\barrivaltime}[\text{$j$ not sold by $t$} \mid \arrivaltime_i = t]  \quad \geq \quad \Pr_{\bval,\barrivaltime}[\text{$j$ not sold by $t$}] \quad =  \quad q_j(t). 
\]
Thus, we  get
\begin{align*} 
\E_{\bval,\barrivaltime}[u_i \mid \arrivaltime_i = t] &\geq \sum_j q_j(t) \cdot \E_{\val_i,S_i^\ast} \left[ \mathbf{1}_{j \in S_i^\ast} \cdot \left(\val_{i, j}^{k, S_i^\ast} - \alpha(t) \cdot \baseprice_j\right)^+  \right] \\
&\geq \sum_j q_j(t) \cdot \E_{\val_i,S_i^\ast} \left[ \mathbf{1}_{j \in S_i^\ast} \cdot \left(\val_{i, j}^{k, S_i^\ast} - \alpha(t) \cdot \baseprice_j\right)  \right].
\end{align*}
Finally, recollect from Eq.~\eqref{eq:XOSBasePriceAltForm} that $\sum_i \E_{\val_i,S_i^\ast} \left[ \mathbf{1}_{j \in S_i^\ast} \cdot \val_{i, j}^{k, S_i^\ast} \right]= \baseprice_j$. Moreover, 
\[ \sum_i \E_{\val_i,S_i^\ast}\left[ \mathbf{1}_{j\in S_i^\ast} \right] \quad = \quad \sum_{i,k} \Pr[\val_i = \val_i^k] \cdot \sum_{S: j\in S} \frac{ x_{i, S}^k }{\Pr[\val_i = \val_i^k]}  \quad = \quad  1. 
\]
Hence, by linearity of expectation
\begin{align*}
\sum_i \E[u_i \mid \arrivaltime_i = t] \geq \sum_{j} q_j(t) \cdot (1 - \alpha(t)) \cdot b_j.
\end{align*}
\end{proof}

We next give a bound on the revenue generated by our algorithm.
\begin{claim} \label{claim:revXOS}
 We can  bound the total expected revenue by
\begin{align} \label{eq:XOSRev}
	\E_{\bval,\barrivaltime}[\Rev] =   - \sum_j \int_{t=0}^{1} q'_j(t) \alpha(t) \cdot \baseprice_j \cdot dt.
\end{align}
\end{claim}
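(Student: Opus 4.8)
The plan is to decompose the revenue additively over the items, exactly as in the proof of Claim~\ref{claim:revMatching}. Each item $j$ is sold at most once, and whenever it is sold it is sold to the buyer currently arriving, who pays exactly $\alpha(\arrivaltime_i)\cdot\baseprice_j$. Hence the payment collected for item $j$ depends only on the (random) time $\tau_j \in [0,1]$ at which $j$ is sold and equals $\alpha(\tau_j)\cdot\baseprice_j$ (and is $0$ if $j$ is never sold). By linearity of expectation, $\E_{\bval,\barrivaltime}[\Rev] = \sum_j \baseprice_j \cdot \E\left[\mathbf{1}_{\text{$j$ sold}}\cdot\alpha(\tau_j)\right]$.

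Next I would identify the law of $\tau_j$ on $[0,1]$. By definition $q_j(t) = \Pr[\text{$j$ not sold before $t$}]$, so $\Pr[\tau_j < t] = 1 - q_j(t)$; that is, $q_j$ is $1$ minus the (sub-)distribution function of $\tau_j$, and $-q_j'(t)\,dt$ is the probability that $j$ is sold in the infinitesimal interval $[t,t+dt)$. Consequently $\E\left[\mathbf{1}_{\text{$j$ sold}}\cdot\alpha(\tau_j)\right] = \int_{t=0}^1 \alpha(t)\cdot\left(-q_j'(t)\right)\,dt$. Summing over all items $j$ then gives $\E_{\bval,\barrivaltime}[\Rev] = -\sum_j\int_{t=0}^1 q_j'(t)\,\alpha(t)\,\baseprice_j\,dt$, which is exactly \eqref{eq:XOSRev}.

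The only point needing a little care — and the closest thing to an obstacle — is the regularity of $q_j$. Since the arrival times are drawn from the (continuous) uniform distribution, $\tau_j$ has no atoms in $(0,1)$, so $q_j$ is continuous there and the density description is legitimate. If one wishes to avoid differentiability assumptions entirely, one can instead write $\E\left[\mathbf{1}_{\text{$j$ sold}}\cdot\alpha(\tau_j)\right]$ as the Riemann--Stieltjes integral $-\int_{t=0}^1 \alpha(t)\,dq_j(t)$ and observe that this agrees with the displayed formula wherever $q_j$ is differentiable; this suffices because the bound is ultimately used only through Lemma~\ref{lem:residualSuffices}, which is stated for differentiable $\alpha$ and treats $r(t)=\sum_j q_j(t)\baseprice_j$ accordingly. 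Everything else is just linearity of expectation, so I expect the proof to be as short as that of Claim~\ref{claim:revMatching}.
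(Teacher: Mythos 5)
Your argument is essentially identical to the paper's: decompose revenue over items, observe that $-q_j'(t)\,dt$ is the probability the item is sold in $[t,t+dt)$, and integrate the payment $\alpha(t)\baseprice_j$ against this density. The paper gives exactly this one-line justification (without your extra remark on regularity, which is a harmless and reasonable addition).
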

\begin{proof}
Since $-q'_j(t)  dt$ is the  probability that item $j$ is bought between $t$ and $t+dt$ (note $q_j(t)$ is decreasing in $t$), we have
\begin{align*}
\E[\Rev] &= - \sum_j \int_{t=0}^{1} q'_j(t) \alpha(t) \cdot \baseprice_j \cdot dt \enspace . 
\end{align*}
\end{proof}

\IGNORE{
\begin{align*}
\E[u_i \mid \arrivaltime_i = t] & \geq \sum_{j} \E \left[ \mathbf{1}_{\text{$j$ not sold by $t$}} \mathbf{1}_{j \in S^\ast(\val_i, \bval^{(i)}_{-i})} \cdot \left(\val_{i, j}^{k, S^\ast(\val_i, \bval^{(i)}_{-i})} - \alpha(t) \cdot \baseprice_j\right)^+ \growingmid \arrivaltime_i = t \right] \\
& = \sum_{j} \Pr[\text{$j$ not sold by $t$} \mid \arrivaltime_i = t] \E\left[ \mathbf{1}_{j \in S^\ast(\val_i, \bval^{(i)}_{-i})} \cdot \left(\val_{i, j}^{k, S^\ast(\val_i, \bval^{(i)}_{-i})} - \alpha(t) \cdot \baseprice_j\right)^+  \growingmid \arrivaltime_i = t \right] \\
& \geq \sum_{j} q_j(t) \E\left[ \mathbf{1}_{j \in S^\ast(\val_i, \bval^{(i)}_{-i})} \cdot \left(\val_{i, j}^{k, S^\ast(\val_i, \bval^{(i)}_{-i})} - \alpha(t) \cdot \baseprice_j\right) \right] \\
& = \sum_{j} q_j(t) \E\left[ \mathbf{1}_{j \in S^\ast(v)} \cdot \left(\val_{i, j}^{k, S^\ast(v)} - \alpha(t) \cdot \baseprice_j\right) \right]
\end{align*}]}

\section{Prophet Secretary for Matroids}\label{sec:matroid}

Let $\val_i $ denote the random value of the $i$'th buyer (element) and let $\hat{\val_i}$ denote another independent draw from the value distribution of the $i$'th buyer.
 The problem is to select a subset $I$ of the buyers that form a feasible set in matroid $\M$, while trying to maximize $\sum_{i\in I} \val_i$.  We restate our main result for the matroid setting.
\MPS*

We need the following notation to describe our algorithm.
\begin{definition} \label{def:RemCost}
For a given \Sahil{vector $\hat{\bval}$ of values of $n$ items} and $A\subseteq [n]$, we define the following:
\begin{itemize}
\item Let  $Opt(\hat{\bval} \mid A) \Thomas{\subseteq [n] \setminus A}$ denote the optimal solution set  in the contracted matroid $\M/A$.
\item Let  $R(A,\hat{\bval}) := \sum_{i\in Opt(\hat{\bval} \mid A)} \hat{\bval}_i $  denote the remaining value after selecting set $A$.
\end{itemize}
\end{definition}

We next define a \emph{base price} of for every buyer $i$.
\begin{definition} Let $A$ denote the independent set of buyers that have been accepted till now.
\begin{itemize}
\item Let $\baseprice_i(A,\hat{\bval}) := R(A,\hat{\bval}) - R(A\cup \{ i \},\hat{\bval})$  denote a threshold for buyer $i$.
\item Let $\baseprice_i(A) := \E_{\hat{\bval}}[\baseprice_i(A,\hat{\bval})]$  denote the \emph{base price} for buyer $i$.
\end{itemize}

\end{definition}

\Thomas{Starting with $A_0=\emptyset$, let $A_t$ denote the set of accepted buyers \emph{before} time $t$. This is a random variable that depends on the values $\bval$ and arrival times $\barrivaltime$.}
Suppose a  buyer $i$ arrives at time $t$, then our algorithm selects $i$ iff both $\val_i > \alpha(t) \cdot \baseprice_i(A_t) $ and selecting $i$ is feasible in $\M$.

Consider the function $r(t) := \E_{\bval, \hat{\bval},\barrivaltime}[R(A_t,\hat{\bval})]$, where $A_t$ is a function of $\bval$ \Thomas{and $\barrivaltime$}. Clearly, $r(0) = \E[\OPT]$. Using the following  Lemma~\ref{lem:utilMat} and Claim~\ref{claim:revMat}, we prove  that $r$ is a residual function. Hence,   Lemma~\ref{lem:residualSuffices} implies Theorem~\ref{thm:matProphetSec}.

\begin{claim}\label{claim:revMat}
\[ \E_{\bval,\barrivaltime}[\Rev] = - \int_{t=0}^1 \alpha(t) \cdot r'(t) dt.
\]
\end{claim}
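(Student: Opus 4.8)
The plan is to follow the template set by the single-item calculation in Example~\ref{example:single-item} and by Claims~\ref{claim:revMatching} and \ref{claim:revXOS}, but now tracking how each acceptance shrinks the \emph{residual} $R(\cdot,\hat{\bval})$ rather than a fixed base price $\baseprice_j$. First I would write the revenue as a sum over buyers, $\E_{\bval,\barrivaltime}[\Rev] = \sum_i \E\big[\mathbf{1}_{i\text{ accepted}}\cdot\alpha(\arrivaltime_i)\cdot\baseprice_i(A_{\arrivaltime_i})\big]$, and condition on the arrival time $\arrivaltime_i = t$. Since $A_{\arrivaltime_i}$ is by definition the set accepted strictly before $\arrivaltime_i$, it equals $A_t$ (and does not contain $i$), buyer $i$ is accepted at time $t$ iff $\val_i > \alpha(t)\baseprice_i(A_t)$ and $A_t\cup\{i\}\in\mathcal{I}$, and in that case the payment is exactly $\alpha(t)\baseprice_i(A_t)$. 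Because $\arrivaltime_i$ is uniform, integrating over $t$ and summing over $i$ gives
\[
\E_{\bval,\barrivaltime}[\Rev] = \int_{t=0}^1 \alpha(t)\cdot g(t)\, dt, \qquad \text{where } g(t) := \sum_i \E\big[\mathbf{1}_{i\text{ accepted at }t}\cdot \baseprice_i(A_t) \,\big|\, \arrivaltime_i = t\big].
\]

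The crux is then to show $g(t) = -r'(t)$, which I would establish by differentiating $r(t) = \E_{\bval,\hat{\bval},\barrivaltime}[R(A_t,\hat{\bval})]$ from first principles. Over a short window $[t,t+\epsilon)$ the probability that two or more buyers arrive is $O(\epsilon^2)$, and when exactly one buyer $i$ arrives the accepted set just before $i$ is processed is precisely $A_t$; if $i$ is accepted then the residual drops by $R(A_t,\hat{\bval}) - R(A_t\cup\{i\},\hat{\bval}) = \baseprice_i(A_t,\hat{\bval})$, and otherwise it is unchanged. Using $\Pr[\arrivaltime_i\in[t,t+\epsilon)] = \epsilon$ and the fact that, conditioned on this event, the joint law of $(\val_i,\hat{\bval},A_t)$ agrees in the limit $\epsilon\to 0$ with its law conditioned on $\arrivaltime_i=t$, one gets
\[
r(t) - r(t+\epsilon) = \epsilon\sum_i \E\big[\mathbf{1}_{i\text{ accepted at }t}\cdot\baseprice_i(A_t,\hat{\bval})\,\big|\,\arrivaltime_i=t\big] + o(\epsilon).
\]
Finally, since $\hat{\bval}$ is a fresh sample independent of $\barrivaltime$ and of $\bval$, I would take $\E_{\hat{\bval}}[\,\cdot\mid A_t]$ inside and replace $\baseprice_i(A_t,\hat{\bval})$ by its mean $\baseprice_i(A_t)$ (the indicator $\mathbf{1}_{i\text{ accepted at }t}$ does not involve $\hat{\bval}$), yielding $-r'(t) = g(t)$; substituting back gives the claim. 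Note the monotonicity $R(A\cup\{i\},\hat{\bval})\le R(A,\hat{\bval})$ (contracting a matroid only lowers the optimum) guarantees $r$ is nonincreasing, so $r'(t)\le 0$ and these manipulations are well behaved.

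The main obstacle is making the ``differentiate $r$'' step fully rigorous: one must control the $O(\epsilon^2)$ contribution of multiple arrivals (which requires an integrable dominating bound such as $R(\emptyset,\hat{\bval})\le\sum_i\hat{\bval}_i$), verify that conditioning on $\arrivaltime_i\in[t,t+\epsilon)$ versus $\arrivaltime_i=t$ perturbs the joint law of $(A_t,\val_i,\hat{\bval})$ by only $o(1)$, and check that the acceptance test evaluated at $\arrivaltime_i\in[t,t+\epsilon)$ converges to the one at $\arrivaltime_i=t$ (here differentiability of $\alpha$ is used). All of this parallels — and is no harder than — the corresponding argument for \eqref{eq:template:revenue} in Example~\ref{example:single-item}, so in the write-up it can be carried out at the same informal level of detail.
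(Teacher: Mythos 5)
Your proof is correct, but it organizes the argument differently from the paper. You decompose the revenue buyer by buyer, condition each buyer's payment on $\arrivaltime_i = t$, integrate to obtain $\E[\Rev] = \int_0^1 \alpha(t)\,g(t)\,dt$ with $g(t) = \sum_i \E\bigl[\one_{\text{$i$ accepted at $t$}}\,\baseprice_i(A_t) \growingmid \arrivaltime_i = t\bigr]$, and then establish $g(t) = -r'(t)$ by a first-principles $\epsilon$-window differentiation of $r$. The paper instead fixes all of $\bval,\barrivaltime$ and looks at the revenue accrued on the whole window $[t,t+\epsilon]$ at once: if $i_1,\ldots,i_k$ are the acceptances there, the identity $\baseprice_{i_j}(A_{\arrivaltime_{i_j}}) = \E_{\hat\bval}\bigl[R(A_t\cup\{i_1,\ldots,i_{j-1}\},\hat\bval) - R(A_t\cup\{i_1,\ldots,i_j\},\hat\bval)\bigr]$ makes the revenue increment telescope \emph{exactly} to $\E_{\hat\bval}[R(A_t,\hat\bval)-R(A_{t+\epsilon},\hat\bval)]$, after which the factors $\alpha(\arrivaltime_{i_j})$ are sandwiched between $\alpha(t+\epsilon)$ and $\alpha(t)$, expectations are taken, and one passes to the limit. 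This telescoping is the simplification your route lacks: it accounts for any number of acceptances in the window with an exact identity, so there is no $O(\epsilon^2)$ multiple-arrival term to dominate, no need for an integrable bound on $R(\emptyset,\hat\bval)$, and no need to compare the joint law given $\arrivaltime_i\in[t,t+\epsilon)$ to the law given $\arrivaltime_i=t$ --- precisely the three obstacles you flag at the end. Those obstacles are real but dischargeable, so your proof would go through; what you buy in exchange is a decomposition that directly mirrors Claims~\ref{claim:revMatching} and~\ref{claim:revXOS}, where the analogue of $g(t)$ is $-\sum_j q_j'(t)\baseprice_j$, and one that needs only continuity of $\alpha$ rather than the monotonicity used in the paper's sandwich.
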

\IGNORE{
\tnote{This is the original proof.}
\begin{proof} We prove the lemma after conditioning on  $\bval$ and $\barrivaltime$. For any time $t \in [0,1]$, we claim that $d \Rev = \alpha(t) \cdot dE_{\hat{\bval}}[C(A_t,\hat{\bval})]$. This claim proves the lemma because conditioning on $\hat{\bval}$ fixes $Opt(\hat{\bval})$, which implies $d C(A_t,\hat{\bval}) = - d R(A_t,\hat{\bval})$ by Definition~\ref{def:RemCost}. The claim is true  because at any time $t$, either no item is picked  and both sides of the claim are zero. Otherwise, we pick some buyer $i$ (i.e., add $i$ to set $A_t$) that changes the $\Rev$  by $ \baseprice_i = \alpha(t) \cdot \E_{\hat{\bval}}[C(A_t \cup \{ i \},\hat{\bval})-C(A_t ,\hat{\bval})]$. However,  $\E_{\hat{\bval}}[C(A_t \cup \{ i \},\hat{\bval})-C(A_t ,\hat{\bval})]$ is exactly the change in $C(A_t,\hat{\bval})$ when we pick of $i$ at time $t$. Hence, the claim and the lemma is true.
\end{proof}
\tnote{This is an alternative.}
}

\ifFULL
\begin{proof}
Consider the time from $t$ to $t + \epsilon$ for some $t \in [0,1]$, $\epsilon > 0$. Let us fix the arrival times $\barrivaltime$ and values $\bval$ of all elements. This also fixes the sets $(A_t)_{t \in [0,1]}$. Let $i_1, \ldots, i_k$ be the arrivals between $t$ and $t+\epsilon$ that get accepted in this order. Note that it is also possible that $k=0$. The revenue obtained between $t$ and $t+\epsilon$ is now given as
\begin{align*}
\Rev_{\leq t+\epsilon} - \Rev_{\leq t} & = \sum_{j=1}^k \alpha(t_{i_j}) \baseprice_{i_j}(A_{t_{i_j}})\\
& = \sum_{j=1}^k \alpha(t_{i_j}) \E_{\hat{\bval}} \left[ R(A_t \cup \{i_1, \ldots i_{j-1}\}, \hat{\bval}) - R(A_t \cup \{i_1, \ldots i_j\}, \hat{\bval}) \right] \\
& \geq \alpha(t+\epsilon) \E_{\hat{\bval}} \left[ R(A_t, \hat{\bval}) - R(A_{t+\epsilon}, \hat{\bval}) \right].
\end{align*}
Taking the expectation over $\bval$ and $\barrivaltime$, we get by linearity of expectation
\[
\E_{\bval,\barrivaltime}[\Rev_{\leq t+\epsilon}] - \E_{\bval,\barrivaltime}[\Rev_{\leq t}] \geq \alpha(t+\epsilon) (r(t) - r(t+\epsilon)).
\]
By the same argument, we also have
\[
\E_{\bval,\barrivaltime}[\Rev_{\leq t+\epsilon}] - \E_{\bval,\barrivaltime}[\Rev_{\leq t}] \leq \alpha(t) (r(t) - r(t+\epsilon)) .
\]
In combination, we get that
\[
\frac{d}{dt} \E_{\bval,\barrivaltime}[\Rev_{\leq t}] = - \alpha(t) r'(t) ,
\]
which implies the claim.
\end{proof}
\else
We prove Claim~\ref{claim:revMat} in the full version.
\fi

\begin{lemma}\label{lem:utilMat}
\[ \E_{\bval,\barrivaltime}[\text{Utility}] \geq  \int_{t=0}^1  (1-\alpha(t)) \cdot r(t) dt.
\]
\end{lemma}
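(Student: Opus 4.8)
\textbf{Proof proposal for Lemma~\ref{lem:utilMat}.}

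The plan is to mirror the structure of Lemma~\ref{lem:utilMatching} and Lemma~\ref{lem:utilXOS}: condition on the arrival time $\arrivaltime_i = t$ of buyer $i$, exhibit one concrete feasible action the algorithm could take to lower bound $u_i$, and then sum over $i$. The natural candidate action for buyer $i$ arriving at time $t$: select $i$ provided $i \in Opt(\hat{\bval} \mid A_t)$ for a fresh independent sample $\hat{\bval}$, provided $\val_i > \alpha(t) \cdot \baseprice_i(A_t)$, and provided adding $i$ is feasible in $\M$. Since $i \in Opt(\hat{\bval} \mid A_t) \subseteq [n]\setminus A_t$ means $A_t \cup \{i\}$ is independent, feasibility is automatic. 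This gives
\[
\E_{\bval,\barrivaltime}[u_i \mid \arrivaltime_i = t] \geq \E_{\bval,\barrivaltime,\hat\bval}\left[ \mathbf{1}_{i \in Opt(\hat{\bval}\mid A_t)} \cdot \left(\val_i - \alpha(t)\cdot \baseprice_i(A_t)\right)^+ \growingmid \arrivaltime_i = t\right].
\]

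The key structural point — analogous to the ``absence of $i$ only helps'' / Lemma~\ref{lem:withwithouiBip} step in the other proofs — is that the set $A_t$ of accepted buyers before time $t$ does not depend on $\val_i$ when we condition on $\arrivaltime_i = t$, since buyer $i$ has not arrived yet; it depends only on $\bval_{-i}$ and $\barrivaltime$. Hence, conditioned on $A_t$, the indicator $\mathbf{1}_{i \in Opt(\hat\bval \mid A_t)}$ depends only on $\hat\bval$ (independent of everything), the base price $\baseprice_i(A_t)$ is a deterministic function of $A_t$, and $\val_i$ is independent of $A_t$. Swapping the (identically distributed, independent) roles of $\val_i$ and $\hat\val_i$ and dropping the positive part, I would get
\[
\E_{\bval,\barrivaltime}[u_i \mid \arrivaltime_i = t] \geq \E_{\bval_{-i},\barrivaltime,\hat\bval}\left[ \mathbf{1}_{i \in Opt(\hat{\bval}\mid A_t)} \cdot \left(\hat\val_i - \alpha(t)\cdot \baseprice_i(A_t)\right) \growingmid \arrivaltime_i = t\right].
\]
Now sum over $i$. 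Fixing $A_t$ and $\hat\bval$, the term $\sum_{i} \mathbf{1}_{i \in Opt(\hat\bval \mid A_t)} \hat\val_i$ equals $R(A_t,\hat\bval)$ by Definition~\ref{def:RemCost}. For the price term, $\sum_i \mathbf{1}_{i \in Opt(\hat\bval \mid A_t)} \baseprice_i(A_t) = \sum_{i \in Opt(\hat\bval\mid A_t)} \E_{\hat\bval'}[R(A_t,\hat\bval') - R(A_t\cup\{i\},\hat\bval')]$; taking the further expectation over $\hat\bval$ this is at most $\E_{\hat\bval}[R(A_t,\hat\bval)] = $ (the conditional value of $r$), because the marginal contributions $R(A_t,\hat\bval')-R(A_t\cup\{i\},\hat\bval')$ of the greedily-optimal elements of one independent set, summed, are bounded by the optimum — a submodularity/exchange property of matroid rank-weight optimization. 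Combining, $\sum_i \E[u_i \mid \arrivaltime_i = t] \geq (1-\alpha(t)) \cdot \E_{\bval,\barrivaltime,\hat\bval}[R(A_t,\hat\bval)] = (1-\alpha(t)) r(t)$, and integrating over $t \in [0,1]$ via $\E[\Util] = \sum_i \int_0^1 \E[u_i \mid \arrivaltime_i=t]\,dt$ finishes the proof.

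The main obstacle I anticipate is the inequality $\sum_{i \in Opt(\hat\bval\mid A_t)} \E_{\hat\bval}[\baseprice_i(A_t,\hat\bval)] \leq \E_{\hat\bval}[R(A_t,\hat\bval)]$ needed to turn the sum of base-price terms into $\alpha(t) r(t)$. This requires care: it is really the statement that for a fixed weight vector $w$ and fixed independent set $A$, if $O = Opt(w\mid A)$ is the max-weight basis of $\M/A$, then $\sum_{i \in O}\big(R(A,w') - R(A\cup\{i\},w')\big) \le R(A,w')$ for the relevant $w'$ — i.e. the marginal losses from deleting the elements of $O$ one at a time (each evaluated from $A$, not cumulatively) sum to at most the whole optimum. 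One has $R(A,w') - R(A\cup\{i\},w') = w'_i - (\text{gain from re-optimizing})$, and these quantities, being marginals of the concave/submodular function $A \mapsto R(A,w')$ restricted appropriately, should telescope-dominate correctly; I would verify this using the matroid exchange axiom or, more cleanly, by observing $R(A\cup\{i\},w') \ge R(A,w') - w'_i$ pointwise wait — that gives the wrong direction, so the correct route is to write $\baseprice_i(A) = \E_{\hat\bval}[R(A,\hat\bval)-R(A\cup\{i\},\hat\bval)]$ and use that $\{i : i \in Opt(\hat\bval\mid A_t)\}$ indexes a single independent set whose total ``base value'' is exactly absorbed by $r$; the cleanest argument is probably to note that running the algorithm's selection rule against the sampled optimum is exactly the Kleinberg–Weinberg-style balanced-threshold bookkeeping, and the price collected never exceeds $\alpha(t)$ times the value it displaces. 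I expect a short lemma isolating $\sum_{i \in Opt(w\mid A)} \baseprice_i(A) \le R(A)$ (in expectation over the fresh sample) to be the technical heart, and everything else to be routine.
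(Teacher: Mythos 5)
Your skeleton matches the paper's: condition on $\arrivaltime_i = t$, swap $\val_i$ with an independent sample $\hat\val_i$ (valid because $A_t$ is independent of $\val_i$ given $\arrivaltime_i = t$), lower bound the indicator $\one_{i\not\in Span(A_t)}$ by $\one_{i\in Opt(\hat\bval\mid A_t)}$, drop the positive part, sum over $i$ to assemble $R(A_t,\hat\bval)$, and control the price term. But there is one genuine missing step and one place where you correctly flagged, but did not close, a gap.

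The missing step is the removal of the conditioning on $\arrivaltime_i = t$. Your second display still carries $\arrivaltime_i = t$ inside the expectation, and then you ``sum over $i$, fixing $A_t$ and $\hat\bval$'' to produce $R(A_t,\hat\bval)$ and conclude $\ge (1-\alpha(t))\,r(t)$. This does not follow: each $i$'s expectation is taken under a different conditioning (buyer $i$ absent before $t$), so the $A_t$'s in the $n$ terms are differently distributed and the sum is not a single expectation of $R(A_t,\hat\bval)$, whereas $r(t)$ is the \emph{unconditional} expectation $\E_{\bval,\hat\bval,\barrivaltime}[R(A_t,\hat\bval)]$. You treat the independence observation (``$A_t$ does not depend on $\val_i$'') as if it were the ``absence of $i$ only helps'' step; these are distinct. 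The paper inserts a separate Lemma~\ref{lem:withwithouiMatroid} showing
\[
\E_{\barrivaltime_{-i}}\bigl[(\hat\val_i - \alpha(t)\baseprice_i(A_t))\,\one_{i\in Opt(\hat\bval\mid A_t)}\mid \arrivaltime_i=t\bigr]
\;\ge\;
\E_{\barrivaltime}\bigl[(\hat\val_i - \alpha(t)\baseprice_i(A_t))\,\one_{i\in Opt(\hat\bval\mid A_t)}\bigr],
\]
and for matroids this is not a monotonicity triviality: the integrand can be negative, so ``more competitors only hurts'' is not available. The paper's proof is a three-case argument: if $\arrivaltime_i\ge t$ both sides coincide; if $\arrivaltime_i < t$ but $i\notin A_t$ both sides again coincide; and if $i\in A_t$ then $i\in Span(A_t)$ forces $\one_{i\in Opt(\hat\bval\mid A_t)}=0$ on the right, so the right side loses a term while the left keeps it. Without this lemma (or an equivalent argument) your chain of inequalities does not reach $r(t)$.

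On the price term, you correctly identify that $\sum_{i\in Opt(\hat\bval\mid A_t)}\baseprice_i(A_t)\le \E_{\hat\bval'}[R(A_t,\hat\bval')]$ is the technical heart, but you stop at a guess (and at one point explicitly note your attempted inequality ``gives the wrong direction''). The paper isolates this as Lemma~\ref{lem:KWProp2}: for any fixed $\hat\bval'$ and any set $V$ independent in $\M/A_t$, $\sum_{i\in V}(R(A_t,\hat\bval')-R(A_t\cup\{i\},\hat\bval'))\le R(A_t,\hat\bval')$, because the summands are the greedy critical values in $\M/A_t$, and one then invokes Lemma~3.2 of Lucier--Borodin (or Proposition~2 of Kleinberg--Weinberg). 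To make your argument complete you would need to actually establish this bound rather than gesture at ``balanced-threshold bookkeeping.''

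A smaller point: you frame the first inequality as ``exhibit one concrete feasible action the algorithm could take,'' which is the right picture in the combinatorial-auction proofs but not quite apt here; the matroid algorithm has a fixed acceptance rule, so the paper starts from the exact identity $u_i = (\val_i - \alpha(t)\baseprice_i(A_t))^+\one_{i\notin Span(A_t)}$ and only then lower bounds the indicator. Your inequality happens to be the same, but the justification should be the identity plus $\one_{i\notin Span(A_t)}\ge\one_{i\in Opt(\hat\bval\mid A_t)}$, not a choice of strategy.
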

\begin{proof} The utility of buyer $i$ arriving at time $t$ is given by 
\begin{align*} \E_{\bval,\barrivaltime}[u_i \mid \arrivaltime_i = t] &= \E_{\bval,\barrivaltime_{-i}} \left[ \left(\val_i - \alpha(t) \cdot \baseprice_i(A_t) \right) ^+ \cdot \one_{i \not \in Span(A_t)} \growingmid \arrivaltime_i = t \right] .
\end{align*}
\Thomas{Observe that $A_t$ does not depend on $\val_i$ if $\arrivaltime_i = t$ because it includes only the acceptances \emph{before} $t$. It does not depend on $\hat{\val_i}$ either, as $\hat{\val_i}$ is only used for analysis purposes and not known to the algorithm. Since $\val_i$ and $\hat{\val_i}$ are identically distributed, we can also write}
\begin{align}\label{eq:AtXtIndependence}
\E_{\bval,\barrivaltime}[u_i \mid \arrivaltime_i = t] = \E_{\bval,\hat{\bval},\barrivaltime_{-i}} \left[ \left(\hat{\val}_i - \alpha(t) \cdot \baseprice_i(A_t) \right) ^+ \cdot \one_{i \not \in Span(A_t)} \growingmid \arrivaltime_i = t \right].
\end{align}
Now observe that buyer $i$ can belong to $Opt(\hat{\bval}\mid A_t)$ only if it's not already in $Span(A_t)$, which implies  $\one_{i \not \in Span(A_t)} \geq \one_{i \in Opt(\hat{\bval}\mid A_t)}$. Using this and removing non-negativity, we get
\[ \E_{\bval,\barrivaltime}[u_i \mid \arrivaltime_i = t] \geq \E_{\bval,\hat{\bval},\barrivaltime_{-i}} \left[ \left(\hat{\val}_i - \alpha(t) \cdot \baseprice_i(A_t) \right)  \cdot \one_{i \in Opt(\hat{\bval}\mid A_t)} \growingmid \arrivaltime_i = t \right] .
\]
Now we use Lemma~\ref{lem:withwithouiMatroid}  to remove the conditioning on buyer $i$ arriving at time $t$ as this gives a valid lower bound on expected utility, 
\begin{align} \label{eq:utilBuyerIMatroid}
 \E_{\bval,\barrivaltime}[u_i \mid \arrivaltime_i = t] \geq \E_{\bval,\hat{\bval},\barrivaltime} \left[\left(\hat{\val}_i - \alpha(t) \cdot \baseprice_i(A_t) \right)  \cdot \one_{i\in Opt(\hat{\bval}\mid A_t)} \right].
\end{align}
We can now lower bound sum of buyers' utilities using  Eq.~\eqref{eq:utilBuyerIMatroid} to get
\begin{align*}
\E_{\bval,\barrivaltime}[\text{Utility}] & = \sum_i \int_{t=0}^1 \E_{\bval,\barrivaltime}[u_i \mid \arrivaltime_i = t] \cdot  dt \\
&\geq \sum_i \int_{t=0}^1 \E_{\bval,\hat{\bval},\barrivaltime} \left[\left(\hat{\val}_i - \alpha(t) \cdot \baseprice_i(A_t) \right)  \cdot \one_{i\in Opt(\hat{\bval}\mid A_t)} \right] \cdot dt.
\end{align*}

By moving the sum over buyers inside the integrals, we get
\begin{align*}
\E_{\bval,\barrivaltime}[\text{Utility}]  &\geq  \int_{t=0}^1  \E_{\bval,\hat{\bval},\barrivaltime} \left[ \sum_i \left(\hat{\val}_i - \alpha(t) \cdot \baseprice_i(A_t) \right)  \cdot \one_{i\in Opt(\hat{\bval}\mid A_t)} \right] \cdot dt\\
& =\int_{t=0}^1  \E_{\bval,\hat{\bval},\barrivaltime} \left[   R(A_t,\hat{\bval}) -  \alpha(t) \cdot \sum_{i\in Opt(\hat{\bval}\mid A_t)}  \baseprice_i(A_t)   \right] \cdot dt .
\end{align*}

\IGNORE{\Sahil{ 
Now using Lemma~\ref{lem:KWProp2} for $V=Opt(\hat{\bval}\mid A_t)$, gives
\[	\sum_{i\in Opt(\hat{\bval}\mid A_t)}  \baseprice_i(A_t) \quad \leq \quad \E_{\tilde{\bval}} \left[ R(A_t,\tilde{\bval})\right] \quad \leq  \quad
\E_{\hat{\bval}} \left[ R(A_t,\hat{\bval}) \right] ,
\]
where the last inequality uses the definition of $A_t$.
\[ \E_{\bval,\barrivaltime}[\text{Utility}]  \geq \int_{t=0}^1  \E_{\bval,\hat{\bval},\barrivaltime} \left[ \left(1 - \alpha(t) \right) \cdot R(A_t,\hat{\bval})  \right] \cdot dt.
\]
}}

Finally, using Lemma~\ref{lem:KWProp2} for $V=Opt(\hat{\bval}\mid A_t)$, we get 
\[ \E_{\bval,\barrivaltime}[\text{Utility}]  \geq \int_{t=0}^1  \E_{\bval,\hat{\bval},\barrivaltime} \left[ \left(1 - \alpha(t) \right) \cdot R(A_t,\hat{\bval})  \right] \cdot dt.
\]

\IGNORE{
\begin{align*} 
\E_{\bval,\arrivaltime}[\text{Utility}] &= \sum_i \E_{\bval,\arrivaltime}[\text{Utility}(i)] \\
&= \sum_i \int_{t=0}^1 \E_{\bval,\arrivaltime_{-i}} \left[ \left(\val_i - \alpha(t) \cdot \baseprice_i(A_t) \right) ^+ \cdot \one_{i \not \in Span(A_t)} \mid \arrivaltime_i = t \right] \cdot dt \\
&= \sum_i \int_{t=0}^1 \E_{\bval_{-i},\arrivaltime_{-i}} \left[ \E_{\val_i}\left[ \left(\val_i - \alpha(t) \cdot \baseprice_i(A_t) \right) ^+ \right]\cdot \one_{i \not \in Span(A_t)} \mid \arrivaltime_i = t \right] \cdot dt \\
\intertext{because $A_t$ is independent of $\val_i$. Since $\val_i$ and $\hat{\val_i}$ are identically distributed, we get}
&= \sum_i \int_{t=0}^1 \E_{\bval,\hat{\bval},\arrivaltime_{-i}} \left[ \left(\hat{\val}_i - \alpha(t) \cdot \baseprice_i(A_t) \right) ^+ \cdot \one_{i \not \in Span(A_t)} \mid \arrivaltime_i = t \right] \cdot dt \tag{*} \\
\intertext{Now  using  $\one_{i \not \in Span(A_t)} \geq \one_{i \in Opt(\hat{X}\mid A_t)}$, and removing non-negativity,}
&\geq \sum_i \int_{t=0}^1 \E_{X,\hat{X},\arrivaltime_{-i}} \left[ \left(\hat{X}_i - \alpha(t) \cdot \baseprice_i(A_t) \right)  \cdot \one_{i \in Opt(\hat{X}\mid A_t)} \mid \arrivaltime_i = t \right] \cdot dt \\
\intertext{Now using Lemma~\ref{lem:withwithouiMatroid} and moving the sum inside integrals,}
&\geq \int_{t=0}^1  \E_{X,\hat{X},\arrivaltime} \left[ \sum_i \left(\hat{X}_i - \alpha(t) \cdot \baseprice_i(A_t) \right)  \cdot \one_{i\in Opt(\hat{X}\mid A_t)} \right] \cdot dt \\
&= \int_{t=0}^1  \E_{X,\hat{X},\arrivaltime} \left[   R(A_t,\hat{X}) -  \alpha(t) \cdot \sum_{i\in Opt(\hat{X}\mid A_t)}  \baseprice_i(A_t)   \right] \cdot dt \\
&\geq \int_{t=0}^1  \E_{X,\hat{X},\arrivaltime} \left[ \left(1 - \alpha(t) \right) \cdot R(A_t,\hat{X})  \right] \cdot dt, 
\end{align*}
where the last inequality uses Lemma~\ref{lem:KWProp2} for $V=Opt(\hat{X}\mid A_t)$.
}

\end{proof}


Finally, we prove the missing lemma that removes the conditioning on item $i$ arriving at $t$.
\begin{lemma}\label{lem:withwithouiMatroid} For any $i$, any time $t$, and any fixed $\bval,\hat{\bval}$,  we have
\begin{align*} 
\E_{\barrivaltime_{-i}} \left[ \left(\hat{\val}_i - \alpha(t) \cdot \baseprice_i(A_t) \right) \cdot \one_{i\in Opt(\hat{\bval}\mid A_t)} \mid \arrivaltime_i = t \right]  ~\geq~ \E_{\barrivaltime} \left[\left(\hat{\val}_i - \alpha(t) \cdot \baseprice_i(A_t) \right)  \cdot \one_{i\in Opt(\hat{\bval}\mid A_t)} \right].
\end{align*}
\end{lemma}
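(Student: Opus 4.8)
The plan is to exploit a simple but crucial feature of the algorithm: a buyer who arrives and is \emph{rejected} affects neither the set of accepted buyers nor any future base price or feasibility test, so the algorithm behaves exactly as if that buyer were absent. Fix $\bval$ and $\hat{\bval}$ and write $X(A):=\big(\hat{\val}_i-\alpha(t)\cdot\baseprice_i(A)\big)\cdot\one_{i\in Opt(\hat{\bval}\mid A)}$ for the integrand. Let $A_t^{-i}$ denote the set of buyers accepted before time $t$ when the algorithm is run on the instance with buyer $i$ deleted; this depends on $\barrivaltime_{-i}$ only. I would first record two observations. (i) Once buyer $i$ has been accepted we have $i\in Span(A_t)$, hence $i\notin Opt(\hat{\bval}\mid A_t)$ and $X(A_t)=0$; so $X(A_t)$ matters only on the event $\{i\notin A_t\}$. (ii) On $\{i\notin A_t\}$ --- whether because $\arrivaltime_i\ge t$ or because $i$ arrived before $t$ and was rejected --- the execution up to time $t$ is identical to the deletion instance, since a rejection causes no state change and buyer $i$ plays no further role; thus $A_t=A_t^{-i}$ (this parallels the reasoning behind Lemma~\ref{lem:withwithouiBip} for matchings). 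In particular, conditioning on $\arrivaltime_i=t$ forces $i\notin A_t$, so the left-hand side of the lemma equals $\E_{\barrivaltime_{-i}}[X(A_t^{-i})]$.

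Next I would unfold the right-hand side by conditioning on $\arrivaltime_i=\tau$. If $\tau\ge t$, then $A_t=A_t^{-i}$ and the conditional expectation is $\E_{\barrivaltime_{-i}}[X(A_t^{-i})]$. If $\tau<t$, then either buyer $i$ is rejected at $\tau$, in which case $A_t=A_t^{-i}$ again, or buyer $i$ is accepted at $\tau$, in which case $X(A_t)=0$ by (i). Let $\mathcal{A}_\tau$ be the event that buyer $i$ \emph{would} be accepted if it arrived at time $\tau$; since $\val_i$ is fixed, $\mathcal{A}_\tau$ is a function of $\barrivaltime_{-i}$. Integrating over $\tau\in[0,1]$ and collecting terms gives
\[
\E_{\barrivaltime}[X(A_t)] \;=\; \E_{\barrivaltime_{-i}}[X(A_t^{-i})] \;-\; \int_{\tau=0}^{t}\E_{\barrivaltime_{-i}}\!\big[X(A_t^{-i})\cdot\one_{\mathcal{A}_\tau}\big]\,d\tau .
\]
So the lemma is equivalent to $\int_{0}^{t}\E_{\barrivaltime_{-i}}[X(A_t^{-i})\cdot\one_{\mathcal{A}_\tau}]\,d\tau\ge 0$, and it suffices to show $\E_{\barrivaltime_{-i}}[X(A_t^{-i})\cdot\one_{\mathcal{A}_\tau}]\ge 0$ for each $\tau\le t$.

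The last step is the real obstacle, and it is where I expect most of the work. The clean way to handle it is to never discard the positive part: one should prove the lemma with $\big(\hat{\val}_i-\alpha(t)\baseprice_i(A_t)\big)^+$ in place of $\big(\hat{\val}_i-\alpha(t)\baseprice_i(A_t)\big)$ --- which is exactly the quantity appearing before the ``removing non-negativity'' step in the proof of Lemma~\ref{lem:utilMat} --- so that the integrand $X(A_t^{-i})\one_{\mathcal{A}_\tau}$ is manifestly nonnegative and the displayed identity finishes the argument; the positive part is then dropped only afterwards, in the chain of inequalities of Lemma~\ref{lem:utilMat}. If instead one insists on the signed quantity $X$, then on the event $\mathcal{A}_\tau\cap\{i\in Opt(\hat{\bval}\mid A_t^{-i})\}$ one knows $i\notin Span(A_\tau^{-i})\subseteq Span(A_t^{-i})$, $\val_i>\alpha(\tau)\baseprice_i(A_\tau^{-i})$, and, by the marginal-value identity for matroids, $\baseprice_i(A_t^{-i},\hat{\bval})=\hat{\val}_i$; the task is then to combine monotonicity of $\alpha$ with a Kleinberg--Weinberg--type bound on base prices (in the spirit of Lemma~\ref{lem:KWProp2}) to deduce $\hat{\val}_i\ge\alpha(t)\baseprice_i(A_t^{-i})$, i.e.\ $X(A_t^{-i})\ge 0$, on this event. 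Either way, once this nonnegativity is established, integrating over $\tau$ yields the claimed inequality.
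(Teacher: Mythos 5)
Your argument is, at its core, the same case decomposition the paper uses: fix $\barrivaltime_{-i}$ (and $\bval,\hat\bval$), and compare the two sides according to whether $\arrivaltime_i \geq t$, whether $i$ arrives before $t$ but is rejected, or whether $i$ arrives before $t$ and is accepted. In the last case the RHS integrand vanishes because $i \in Span(A_t)$, and in the first two cases both integrands coincide with the deletion-instance quantity $X(A_t^{-i})$; so the two proofs agree on the structural content of the lemma. What you add is the explicit bookkeeping that collapses these cases into
$\E_{\barrivaltime}[X(A_t)] = X(A_t^{-i}) - \int_{0}^{t}\E\!\left[X(A_t^{-i})\,\one_{\mathcal{A}_\tau}\right]d\tau$
and the observation that this yields the lemma \emph{only if} $X(A_t^{-i}) \geq 0$. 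This is a real gap in the paper's one-line justification ``making the inequality trivially true'': equivalently, $\E_{\barrivaltime}[X(A_t)]=(1-p)\,X(A_t^{-i})$ for some probability $p$, which is at most $X(A_t^{-i})$ only when the latter is nonnegative. Since $\baseprice_i(A_t)$ averages over a fresh sample $\tilde\bval$ rather than the fixed $\hat\bval$, the event $\{i\in Opt(\hat\bval\mid A_t)\}$ gives $\baseprice_i(A_t,\hat\bval)=\hat\val_i$ but says nothing about $\hat\val_i$ versus $\alpha(t)\baseprice_i(A_t)$, so the sign of $X(A_t^{-i})$ is genuinely unconstrained. Your proposed fix is exactly right and is the minimal repair: state and prove the lemma with $\left(\hat\val_i-\alpha(t)\baseprice_i(A_t)\right)^+$ in place of the signed quantity, which makes the integrand pointwise nonnegative so that the case comparison closes immediately, and defer the ``removing non-negativity'' step in the proof of Lemma~\ref{lem:utilMat} until after the conditioning has been removed. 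Your second route (bounding $X(A_t^{-i})$ on $\mathcal{A}_\tau\cap\{i\in Opt\}$ via $\alpha$-monotonicity and KW-type bounds) would still have to compare the pointwise $\hat\val_i$ against an averaged base price, which does not hold in general; the positive-part restatement is the one to commit to.
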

\begin{proof}
We prove the lemma for any fixed $\barrivaltime_{-i}$. Suppose we draw a uniformly random $\arrivaltime_i\in [0,1]$. Observe that  if $\arrivaltime_i \geq t$ then we have equality in the above equation because set $A_t$  is the same both with and without $i$. This is also the case when $\arrivaltime_i<t$ but $i$ is not selected into $A_t$. Finally, when $\arrivaltime_i<t$ and $i \in A_t$  we have $\one_{i\in Opt(\hat{\bval}\mid A_t)} =0$ in the presence of item $i$ (i.e., RHS of lemma), making the inequality trivially true.
\end{proof}

\begin{lemma}\label{lem:KWProp2} For any fixed $\bval,\barrivaltime$, time $t$, and set of elements $V$ that is independent in the matroid $\M/A_t$, we have
\[
\sum_{i\in V}  \baseprice_i(A_t) \quad \leq \quad \E_{\hat{\bval}} \left[ R(A_t,\hat{\bval})\right] .
\]
\end{lemma}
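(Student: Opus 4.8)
The plan is to first prove the \emph{pointwise} inequality $\sum_{i\in V}\baseprice_i(A_t,\hat{\bval}) \le R(A_t,\hat{\bval})$ for every fixed realization $\hat{\bval}$ (as throughout, the values $\hat{\val}_i$ are nonnegative), and then take the expectation over $\hat{\bval}$, using $\baseprice_i(A_t)=\E_{\hat{\bval}}[\baseprice_i(A_t,\hat{\bval})]$ and linearity. So fix $\hat{\bval}$, write $w_i:=\hat{\val}_i\ge 0$, and work inside the contracted matroid $N:=\M/A_t$, on whose ground set $V$ lies and is independent. Let $O:=Opt(\hat{\bval}\mid A_t)$, a maximum-$w$-weight independent set of $N$, so $R(A_t,\hat{\bval})=w(O)$. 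Since $w\ge 0$, every element of $N$ lying outside the span of $O$ has weight $0$ (otherwise we could add it to $O$), so we may greedily extend $O$ to a base $O'$ of $N$ with $w(O')=w(O)$.

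The key step is the standard matroid exchange property: since $V$ is independent in $N$ and $O'$ is a base of $N$, there is an injection $\phi\colon V\to O'$ with $\phi(i)=i$ for $i\in V\cap O'$ and such that $(O'\setminus\{\phi(i)\})\cup\{i\}$ is a base of $N$ for every $i\in V$. (This is exactly the ingredient used in Kleinberg--Weinberg; one way to obtain it is Hall's theorem applied to the bipartite graph on $V$ and $O'$ with $i\sim b$ whenever $(O'\setminus\{b\})\cup\{i\}$ is a base: for $W\subseteq V$ the set $Y$ of non-neighbours of $W$ in $O'$ satisfies $W\subseteq\mathrm{span}(O'\setminus Y)$, hence $|W|\le|O'\setminus Y|=|O'|-|Y|$, which is Hall's condition, while the self-edges $i\sim i$ let us fix $\phi=\mathrm{id}$ on $V\cap O'$.) Now fix $i\in V$ and set $S_i:=O'\setminus\{\phi(i)\}$; then $i\notin S_i$ and $S_i\cup\{i\}$ is independent in $N=\M/A_t$, i.e.\ $S_i$ is independent in $\M/(A_t\cup\{i\})$. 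Therefore $R(A_t\cup\{i\},\hat{\bval})\ge w(S_i)=w(O')-w_{\phi(i)}$, and hence
\[
\baseprice_i(A_t,\hat{\bval}) \;=\; R(A_t,\hat{\bval}) - R(A_t\cup\{i\},\hat{\bval}) \;=\; w(O') - R(A_t\cup\{i\},\hat{\bval}) \;\le\; w_{\phi(i)}.
\]
Summing over $i\in V$ and using that $\phi$ is injective with image contained in $O'$ and that $w\ge 0$,
\[
\sum_{i\in V}\baseprice_i(A_t,\hat{\bval}) \;\le\; \sum_{i\in V} w_{\phi(i)} \;=\; \sum_{j\in\phi(V)} w_j \;\le\; \sum_{j\in O'} w_j \;=\; w(O') \;=\; w(O) \;=\; R(A_t,\hat{\bval}).
\]
Taking $\E_{\hat{\bval}}$ of both sides then gives $\sum_{i\in V}\baseprice_i(A_t)=\E_{\hat{\bval}}\big[\sum_{i\in V}\baseprice_i(A_t,\hat{\bval})\big]\le\E_{\hat{\bval}}[R(A_t,\hat{\bval})]$, as claimed.

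\textbf{Where the difficulty lies.}
The only real content is the exchange injection $\phi$; everything else is bookkeeping. The subtleties I would be careful about are: (i) $O=Opt(\hat{\bval}\mid A_t)$ need not be a base of $\M/A_t$ when some values vanish, which is why we pass to $O'$ with $w(O')=w(O)$; (ii) the argument must be carried out pointwise in $\hat{\bval}$ before integrating, since in the application of this lemma inside Lemma~\ref{lem:utilMat} the set $V=Opt(\hat{\bval}\mid A_t)$ itself depends on $\hat{\bval}$; and (iii) the equivalence ``$S_i\cup\{i\}$ independent in $\M/A_t$'' $\iff$ ``$S_i$ independent in $\M/(A_t\cup\{i\})$'', which is valid precisely because $i\in V$ is a non-loop of $\M/A_t$.
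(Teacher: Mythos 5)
Your proof is correct, and it follows the same two-step structure as the paper's: establish the pointwise inequality $\sum_{i\in V}\bigl(R(A_t,\hat{\bval})-R(A_t\cup\{i\},\hat{\bval})\bigr)\le R(A_t,\hat{\bval})$ for each fixed realization $\hat{\bval}$, then take expectations. The difference is entirely in how that pointwise bound is justified. The paper treats it as essentially known: it observes that the terms $R(A_t,\hat{\bval})-R(A_t\cup\{i\},\hat{\bval})$ are the critical values of the greedy algorithm on $\M/A_t$ and then cites Lemma~3.2 of Lucier--Borodin and, as an alternative, Proposition~2 of Kleinberg--Weinberg. You instead give a fully self-contained proof of that inequality: pass from the optimum $O$ to an equal-weight base $O'$ of $\M/A_t$ (valid since nonnegativity of values forces elements outside $\mathrm{span}(O)$ to have weight zero), build the Brualdi-style exchange injection $\phi\colon V\to O'$ fixing $V\cap O'$ with each $(O'\setminus\{\phi(i)\})\cup\{i\}$ a base, and conclude $\baseprice_i(A_t,\hat{\bval})\le w_{\phi(i)}$ so that summing and using injectivity plus nonnegativity gives the bound. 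This is a genuine improvement in self-containedness: a reader does not need to unpack the cited references. The only spot worth tightening is the claim that the self-loops in your Hall graph let you ``fix $\phi=\mathrm{id}$ on $V\cap O'$'' --- this is true (it is exactly Brualdi's strengthening of base exchange, or can be derived by running Hall on $V\setminus O'$ versus $O'\setminus V$ after pre-matching $V\cap O'$ to itself), but as written it is a bit of a wave of the hand; either cite Brualdi or give the one extra line of Hall bookkeeping. You also correctly flag the points that actually matter for the application: extending $O$ to a base to handle zero-weight elements, doing the argument pointwise in $\hat{\bval}$ because $V$ will be chosen as $Opt(\hat{\bval}\mid A_t)$, and the independence-in-contraction equivalence, which hinges on $A_t\cup\{i\}$ being independent (guaranteed since $A_t\cup V$ is independent).
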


\begin{proof}
By definition
\[
\sum_{i\in V}  \baseprice_i(A_t) = \E_{\hat{\bval}} \left[\sum_{i\in V} \left( R(A_t ,\hat{\bval})-R(A_t \cup \{ i \},\hat{\bval})  \right) \right].
\]
Fix the values $\hat{\bval}$ arbitrarily, we also have
\[
\sum_{i\in V} \left( R(A_t ,\hat{\bval})-R(A_t \cup \{ i \},\hat{\bval}) \right) \leq R(A_t,\hat{\bval}).
\]
This follows from the fact that $R(A_t ,\hat{\bval})-R(A_t \cup \{ i \},\hat{\bval})$ are the respective critical values of the greedy algorithm on $\M/A_t$ with values $\hat{\bval}$. Therefore, the bound follows from Lemma~3.2 in \cite{LucierBorodinSODA10}. An alternative proof is given as Proposition~2 in \cite{KW-STOC12} while in our case the first inequality can be skipped and the remaining steps can be followed replacing $A$ by $A_t$.

Taking the expectation over $\hat{\bval}$, the claim follows.
%
%
%
%
\end{proof}

\section{Fixed Threshold Algorithms} \label{section:fixed}
In this section we discuss  the powers and limitations of \textit{Fixed-Threshold Algorithms} (\textit{FTAs}) for single item prophet secretary. In an FTA, we set a fixed threshold for the item at the beginning of the process and then assign it to the first buyer whose valuation exceeds the threshold. The motivation to study FTAs comes from their simplicity, transparency, and fairness in the design of a posted price mechanism (see, e.g., \cite{feldman2015combinatorial}).


In Section~\ref{sec:singleProphSec}, we give a $(1-1/e)$-approximation FTA for single-item prophet secretary. {This seemingly contradicts earlier impossibility results (e.g., \cite{feldman2015combinatorial,esfandiari2015prophet}). However, as we show, these impossibility results do not hold in case of continuous distributions or equivalently randomized tie-breaking.} Next, in Section~\ref{sec:HardnesssingleProphSec}, we present an upper bound for FTAs. In particular, we show that there is no FTA, even for identical distributions, with an approximation factor better than $1-1/e$. This indicates the tightness of our algorithm for prophet secretary. Furthermore, in Appendix~\ref{appendix:matching} we generalize these single item ideas to present an alternate $(1-1/e)$-approximation algorithm for bipartite matching prophet secretary.
	
\subsection{Single Item Prophet Secretary}\label{sec:singleProphSec}

Recall, by $\barrivaltime$ and $\bval$  we denote the random vector of all the buyer arrival times and valuations, respectively. Also, $q(t)$ denotes the probability that the item is unsold till time $t$, where the probability is over valuations $\bval$, arrival times $\barrivaltime$, and any randomness of the algorithm.
We show that a fixed threshold algorithm that selects $\tau$ s.t. 
\[ \Pr_{\bval}[\max\{ v_i \} \leq \tau] = \prod_i \Pr[v_i \leq \tau]  = \frac1e\]
gives a $(1-1/e)$-approximation.

\begin{theorem} \label{theorem:singleTalg}
	There exists a $(1-{1}/{e})$-approximation FTA to single-item prophet secretary.
\end{theorem}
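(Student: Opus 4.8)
The statement to prove is Theorem~\ref{theorem:singleTalg}: the fixed-threshold algorithm that picks $\tau$ with $\prod_i \Pr[v_i \le \tau] = 1/e$ (i.e., $\Pr_{\bval}[\max_i v_i \le \tau] = 1/e$) is a $(1-1/e)$-approximation to single-item prophet secretary. My plan is to decompose the algorithm's value into revenue and utility, just as in the residual framework, but now with a \emph{constant} price $\tau$ instead of a time-varying one, and then to exploit the particular choice of $\tau$.

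First I would set up the decomposition. Let $q(t)$ be the probability the item is still unsold just before time $t$; so $q(0)=1$. Since the price is the constant $\tau$, whenever the item sells the revenue collected is exactly $\tau$, hence $\E[\Rev] = \tau \cdot \Pr[\text{item sells}] = \tau \cdot (1 - q(1))$. For the utility, condition on buyer $i$ arriving at time $t$: as in Example~\ref{example:single-item}, $\E[u_i \mid \arrivaltime_i = t] = \Pr[\text{item unsold before } t \mid \arrivaltime_i = t]\cdot \E[(v_i - \tau)^+] \ge q(t) \cdot \E[(v_i-\tau)^+]$, using that buyer $i$'s own late-or-absent arrival only makes the item \emph{more} likely to be available (the same monotonicity used there). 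Summing over $i$ and integrating over $t \in [0,1]$ gives $\E[\Util] \ge \left(\int_0^1 q(t)\,dt\right)\cdot \sum_i \E[(v_i - \tau)^+]$.

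The next ingredient is a clean handle on $q(t)$. Because the threshold is fixed, the item is unsold before time $t$ exactly when every buyer who has arrived before $t$ has value $\le \tau$; a buyer is "before $t$" independently with probability $t$, and has value $\le \tau$ independently with probability $p_i := \Pr[v_i \le \tau]$. Hence $q(t) = \prod_i \big( (1-t) + t\,p_i \big) = \prod_i (1 - t(1-p_i))$. Using $1 - x \le e^{-x}$ termwise, $q(t) \le \exp\!\big(-t\sum_i (1-p_i)\big)$. Writing $\mu := \sum_i (1-p_i) = \sum_i \Pr[v_i > \tau]$, and noting that the defining condition $\prod_i p_i = 1/e$ combined with $p_i \le e^{-(1-p_i)}$ gives $1/e = \prod p_i \le e^{-\mu}$, so $\mu \ge 1$. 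Then $\int_0^1 q(t)\,dt$ can be bounded below: I would show $\int_0^1 q(t)\,dt \ge \int_0^1 \prod_i(1 - t(1-p_i))\,dt$ and, more usefully for the final combination, that $q(1) = \prod_i p_i = 1/e$ exactly. So $\E[\Rev] = \tau(1 - 1/e)$ exactly, which is already the target bound \emph{scaled by $\tau$} — the whole game is to show $\tau \ge \OPT$ is false in general, so revenue alone is not enough, and I must combine it with the utility term. Indeed $\OPT = \E[\max_i v_i] = \tau + \E[(\max_i v_i - \tau)^+] \le \tau + \sum_i \E[(v_i-\tau)^+]$ is the wrong direction; instead I use $\E[(\max_i v_i - \tau)^+] \le \sum_i \E[(v_i - \tau)^+]$, so $\OPT \le \tau + \sum_i \E[(v_i-\tau)^+]$, i.e. $\sum_i \E[(v_i-\tau)^+] \ge \OPT - \tau$.

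Finally I would combine: $\E[\Alg] = \E[\Rev] + \E[\Util] \ge \tau(1-1/e) + \big(\int_0^1 q(t)\,dt\big)(\OPT - \tau)$. Since $\int_0^1 q(t)\,dt \le 1$ and $\OPT - \tau \ge 0$, the naive bound loses; the key step will be to show that the relevant weighted combination still beats $(1-1/e)\OPT$. The cleanest route: show $\int_0^1 q(t)\,dt \ge 1 - 1/e$ is \emph{false} in general (it's large only when $\mu$ is close to $1$), so one genuinely needs to track the correlation between the "item sold" event contributing $\tau$ and the residual value. I expect the main obstacle to be exactly this: proving that $\tau(1-1/e) + (\OPT-\tau)\int_0^1 q(t)\,dt \ge (1-1/e)\OPT$, equivalently $(\OPT - \tau)\big(\int_0^1 q(t)\,dt - (1-1/e)\big) \ge 0$, which would need $\int_0^1 q(t)\,dt \ge 1-1/e$ — and that does hold, because $q$ is decreasing with $q(0)=1$, $q(1)=1/e$, and $q$ is log-concave-ish enough (a product of affine-in-$t$ terms) that its integral dominates that of the extreme case where all mass is on a single buyer, which is $\int_0^1 (1 - t(1-1/e))\,dt = 1 - \tfrac12(1-1/e) > 1 - 1/e$. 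Making that domination rigorous — convexity of $t \mapsto \prod_i(1-t(1-p_i))$ versus its endpoint-matching affine comparator, or a direct argument that among all decreasing log-concave $q$ with $q(0)=1,\,q(1)=1/e$ the integral is minimized by $q(t)=e^{-t}$, which integrates to $1-1/e$ — is where I would spend the most care. Once $\int_0^1 q(t)\,dt \ge 1 - 1/e$ is in hand, the inequality $(\OPT-\tau)(\int_0^1 q(t)\,dt - (1-1/e)) \ge 0$ is immediate and the theorem follows.
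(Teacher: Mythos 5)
Your decomposition and the key quantities are exactly those of the paper's proof: $\E[\Rev] = (1-1/e)\tau$ exactly, $\E[\Util] \ge \bigl(\int_0^1 q(t)\,dt\bigr)\sum_i \E[(v_i-\tau)^+]$, and the heart of the matter is $\int_0^1 q(t)\,dt \ge 1-1/e$. The paper proves this by showing $q(t)\ge e^{-t}$ pointwise, via the termwise inequality $\ln(1-tx)\ge t\ln(1-x)$ for $t,x\in[0,1)$; your ``log-concavity plus interpolation'' argument is the same fact in disguise, since concavity of $t\mapsto\log q(t)$ with $\log q(0)=0$ and $\log q(1)=-1$ gives $\log q(t)\ge -t$ directly. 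So you have the right key lemma and a valid route to it.

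Two places need tightening. First, your intermediate heuristic that the integral ``dominates that of the extreme case where all mass is on a single buyer'' has the extremal direction backwards: the single-buyer (affine $q$) case gives the \emph{largest} integral, $1-\tfrac12(1-1/e)$, while the infimum over products of affine factors with $q(0)=1$, $q(1)=1/e$ is approached as the factors proliferate and $q\to e^{-t}$, giving exactly $1-1/e$. You do self-correct in the next clause by naming $e^{-t}$ as the minimizer, and that is the argument to make rigorous. Second, your final combination replaces $\sum_i\E[(v_i-\tau)^+]$ by $\OPT-\tau$ and asks for $(\OPT-\tau)\bigl(\int_0^1 q-(1-1/e)\bigr)\ge 0$; this fails if $\tau>\OPT$. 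The clean fix, which is what the paper does, is to keep the positive part: show $\E[\Util]\ge(1-1/e)\,\E[(\max_i v_i-\tau)^+]$ and then use $\max_i v_i\le \tau+(\max_i v_i-\tau)^+$, so that $\E[\Alg]\ge(1-1/e)\tau+(1-1/e)\E[(\max_i v_i-\tau)^+]\ge(1-1/e)\E[\OPT]$ without any sign condition. Alternatively, note that when $\tau\ge\OPT$ the revenue term alone already gives $(1-1/e)\OPT$. Either patch closes the gap; the rest of your argument is sound and matches the paper.
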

	
	\begin{proof}
		Without loss of generality,  assume  all distributions have a finite expectation and a continuous CDF\footnote{This assumption is without loss because the actual CDF can be approximated with arbitrary precision by a continuous function. This approximation corresponds to a randomized tie-breaking in case of point masses.}. As two extreme selections for the threshold, if we set $\tau$ to zero then the FTA selects the first item, and if we set it to infinity then no item will be selected. Therefore, the assumption for the continuity of the distribution function allows us to select a threshold $\tau$ such that the FTA reaches the end of the sequence with an exact probability of $1/e$. This means all of drawn values are below $\tau$ with probability $1/e$. In the remainder, we show that the FTA based on this choice of $\tau$ lead to a $(1-1/e)$-approximation algorithm.
		
		Let $\OPT$ denote  $\max_i\{v_i\}$ and $\Alg$ be a random variable that indicates the value selected by the algorithm, or is zero if no item is selected. The goal is to show $$\E[\Alg]\geq \left(1-\frac{1}{e}\right)\cdot \E[\OPT] .$$
		
		We have $\E[\Alg]=\E[\Rev]+\E[\Util]$. By definition of $\tau$, the algorithm sells the item with probability exactly $1-1/e$; therefore, $\E[\Rev]=\left(1-\frac{1}{e}\right)\tau$. Below, we show  
\begin{align} \label{eq:prophSecSingleUtil}
\E[\Util] \geq \left (1-\frac1e \right) \cdot \E[ (\OPT- \tau)^+] . 
\end{align}
This suffices to prove Theorem~\ref{theorem:singleTalg} because 
\[ \E[\Alg]=\E[\Rev]+\E[\Util] \geq \left(1-\frac{1}{e}\right) \tau + \left(1-\frac1e \right)  \E[ (\OPT- \tau)^+] \geq \left(1-\frac{1}{e}\right) \E[\OPT].
\]
		
We now prove Eq.~\eqref{eq:prophSecSingleUtil}. For the utility, we know
			\begin{align*}
			\E[\Util] &=  \int_{t=0}^1 \sum_{i=1}^{n}  \E[u_i \mid \arrivaltime_i = t] \cdot dt\\
			& = \int_{t=0}^1 \sum_{i=1}^{n} \Pr[\text{item not sold before t}\mid \arrivaltime_i = t] \cdot \E[(v_i -\tau)^+] \cdot dt \\
			& \geq  \int_{t=0}^1 \sum_{i=1}^{n} q(t) \cdot  \E[(v_i -\tau)^+] \cdot dt \\
			&= \sum_{i=1}^{n} \E[(v_i -\tau)^+]  \cdot \int_{t=0}^1  q(t)   \cdot dt,
			\end{align*}
where the inequality uses the observation  $Pr[\text{item not sold before t}\mid \arrivaltime_i = t] \geq Pr[\text{item not sold before $t$}]$.
In the following Lemma~\ref{lem:qtLowerBound}, we show $q(t) \geq \exp(-t)$. This implies  $\int_{t=0}^1  q(t)   \cdot dt \geq 1-\frac1e$, which proves the missing Eq.~\eqref{eq:prophSecSingleUtil} because 
\[	\sum_{i=1}^{n} \E[(v_i -\tau)^+]  \geq \E[ ( \max_i \{v_i\} - \tau)^+]  =  \E[ (\OPT - \tau)^+]. 
\]
\end{proof}

\begin{lemma}	\label{lem:qtLowerBound} For $t\in [0,1]$, we have
\[ q(t) \geq \exp(-t).
\]
\end{lemma}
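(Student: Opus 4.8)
The plan is to obtain an exact product formula for $q(t)$ and then finish with a one-line concavity argument.

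First I would unpack what ``item unsold before $t$'' means for a fixed-threshold rule. The FTA leaves the item unsold strictly before time $t$ if and only if no buyer $i$ with $v_i > \tau$ arrives at a time $\arrivaltime_i < t$. Writing $p_i := \Pr[v_i > \tau]$ and using that the arrival times are i.i.d.\ uniform on $[0,1]$ and independent of the value vector $\bval$, the event ``$\arrivaltime_i < t$ and $v_i > \tau$'' has probability exactly $t\cdot p_i$, and these events are mutually independent across $i$. Hence
\[
q(t) \;=\; \prod_{i=1}^n \bigl(1 - t\, p_i\bigr).
\]
Since $\prod_{i=1}^n (1-p_i) = \Pr[\OPT \leq \tau] = 1/e > 0$, every factor $1-p_i$ is strictly positive, so $p_i < 1$ and therefore $1 - t p_i \geq 1 - p_i > 0$ for all $t \in [0,1]$; in particular the logarithms used below are well-defined.

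Next I would take logarithms and set $f(t) := \ln q(t) = \sum_{i=1}^n \ln(1 - t p_i)$. Differentiating twice gives $f''(t) = -\sum_{i=1}^n p_i^2/(1 - t p_i)^2 \leq 0$, so $f$ is concave on $[0,1]$. Its endpoint values are $f(0) = 0$ and, by the defining property of $\tau$, $f(1) = \sum_{i=1}^n \ln(1-p_i) = \ln(1/e) = -1$. Concavity applied to the two endpoints then yields, for every $t \in [0,1]$,
\[
f(t) \;\geq\; (1-t)\,f(0) + t\,f(1) \;=\; -t,
\]
and exponentiating gives $q(t) = e^{f(t)} \geq e^{-t}$, which is exactly the claim. (This also immediately recovers the needed bound $\int_0^1 q(t)\,dt \geq 1 - 1/e$ used in the proof of Theorem~\ref{theorem:singleTalg}.)

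The only step that needs a little care is the product formula for $q(t)$: one must check that conditioning on the full arrival-time and value vectors factorizes cleanly, i.e.\ that ``item unsold before $t$'' genuinely reduces to ``no above-threshold buyer arrives in $[0,t)$.'' This is immediate for a fixed threshold but would break for an adaptive rule, so it is worth stating explicitly. Everything after that is the elementary concavity estimate above, so I do not anticipate a real obstacle.
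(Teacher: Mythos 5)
Your proof is correct and follows essentially the same route as the paper: both derive the product formula $q(t) = \prod_{i=1}^n (1 - t\,\Pr[v_i > \tau])$ and then apply a concavity estimate, the only cosmetic difference being that you invoke concavity of the aggregate $\sum_i \ln(1 - t p_i)$ via its endpoints $f(0)=0$, $f(1)=-1$, whereas the paper applies the equivalent termwise inequality $\ln(1-tx) \geq t\ln(1-x)$ and then sums. Both arguments are sound and rest on the same calculus fact.
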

\begin{proof} 
Observe that 
\[q(t) = \prod_{i=1}^n \Pr[\text{$i$ does not buy the item till $t$}]. \]
Since $i$ gets the item only by arriving before $t$ and having a value above $\tau$, we get 
\[q(t) = \prod_{i=1}^n (1 - t\cdot \Pr[v_i > \tau]) = \exp \left( \sum_{i=1}^n  \ln (1 - t\cdot \Pr[v_i > \tau]) \right).\]
Notice that for $t,x\in [0,1)$, we have $\ln(1-tx) \geq t\cdot \ln(1-x)$. This gives, 
\[	q(t)  \geq \exp \left( t\cdot \sum_{i=1}^n  \ln (1 -  \Pr[v_i > \tau]) \right) = \exp(t \cdot \ln(1/e)) = \exp(-t),
\]
where we use $\prod_{i=1}^n (1 -  \Pr[v_i > \tau]) = \frac1e$ by definition of $\tau$.
\end{proof}


\subsection{Impossibility for IID Prophet Inequalities}\label{sec:HardnesssingleProphSec}

In the following we prove an impossibility result for FTAs for single item prophet secretary. We show this impossibility even for the special case of iid items. For every $n$, we give a common distribution $D$ for every item such that no FTA can achieve an approximation factor better than $1-1/e$. This also implies the tightness of the algorithm discussed in Section~\ref{sec:singleProphSec}.

\begin{theorem}	\label{thm:prophSecFixedLower}
	Any FTA for iid prophet inequality is at most $(1-\frac{1}{e}+O(\frac{1}{n}))$-approximation\footnote{Jose Correa later pointed to us that the lower bound in Theorem~\ref{thm:prophSecFixedLower} also follows from a result in~\cite{CFHOV-EC17}.}.
\end{theorem}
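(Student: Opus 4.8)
The plan is to exhibit, for every $n$, an explicit i.i.d. distribution $D$ on which every fixed threshold $\tau$ leaves a $(1-1/e)$-fraction of $\E[\OPT]$ on the table (up to $O(1/n)$ corrections). A natural candidate, mirroring the classical tight example for prophet secretary, is a distribution with a point mass $1-1/n$ at value $0$ and a polynomially-decaying tail so that the maximum of $n$ draws is $\Theta(1)$ in expectation; concretely, something like $\Pr[v > x] = \frac{1}{nx}$ on an interval $[1, n]$ together with $\Pr[v = 0] = 1 - 1/n$ (continuous on the tail so that $\tau$ can be chosen exactly, consistent with the randomized tie-breaking convention used in Section~\ref{sec:singleProphSec}). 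The key quantities to track, as functions of $\tau$, are: (i) $p(\tau) := \Pr[v > \tau]$, the per-buyer "hit" probability; (ii) $\E[\OPT] = \E[\max_i v_i]$, which is a fixed constant $\Theta(1)$ independent of $\tau$; and (iii) $\E[\Alg]$ under threshold $\tau$.

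The first step is to compute $\E[\Alg]$ for an arbitrary threshold. Since under random arrival order the item is sold exactly when at least one buyer has $v_i > \tau$, and by symmetry every such buyer is equally likely to be the one who gets it, we have $\E[\Alg] = \Pr[\exists i: v_i > \tau] \cdot \E[v_i \mid v_i > \tau] = \bigl(1 - (1-p(\tau))^n\bigr)\cdot \E[v \mid v > \tau]$. Then I would optimize the ratio $\E[\Alg]/\E[\OPT]$ over $\tau$ for the chosen $D$. The heuristic: if $np(\tau) \to \lambda$ for a constant $\lambda$, then $1-(1-p)^n \to 1 - e^{-\lambda}$, while the conditional expectation $\E[v \mid v > \tau]$ and $\E[\OPT]$ can both be read off from the tail. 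The tail should be tuned so that the "value collected conditioned on a hit" exactly cancels the factor by which $\lambda$ differs from $1$, forcing $\sup_\tau \E[\Alg] \le (1-1/e)\E[\OPT] + O(1/n)$ regardless of how $\lambda$ is chosen. This is the standard phenomenon that the exponential $1-e^{-\lambda}$ combined with a $1/\lambda$-type value profile is maximized at the $1-1/e$ barrier.

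The second step is the $O(1/n)$ bookkeeping: the approximation $1-(1-p)^n \approx 1-e^{-np}$ incurs error $O(1/n)$ when $np = \Theta(1)$, the discretization of which buyer arrives first contributes nothing (random order is exact), and $\E[\OPT]$ for the finite-$n$ distribution differs from its limiting value by $O(1/n)$; these I would bound crudely. The main obstacle I anticipate is choosing the tail of $D$ so that the optimization over $\tau$ genuinely has its supremum at $1-1/e$ for every finite $n$ rather than only in the limit — i.e., ruling out that some cleverly placed threshold (near the point mass, or out in the extreme tail) does better. I would handle this by splitting the range of $\tau$ into regimes ($np(\tau)$ bounded below, $\to 0$, or $\to \infty$) and showing the ratio is $\le 1 - 1/e + O(1/n)$ in each; the middle regime is where the real bound lives and the two extremes are easy (either the item is almost never sold, or it is sold but for a value negligible compared to $\E[\OPT]$). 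Since the footnote notes this also follows from~\cite{CFHOV-EC17}, I expect the authors' proof to be a short self-contained instantiation along exactly these lines.
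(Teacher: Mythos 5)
Your high-level framework is exactly what the paper does. The identity
\[
\E[\Alg] \;=\; \bigl(1-(1-p(\tau))^n\bigr)\cdot\E[v\mid v>\tau],\qquad p(\tau)=\Pr[v>\tau],
\]
is correct for an i.i.d.\ instance (the paper obtains the same thing in the equivalent form $\tfrac{1-p^n}{1-p}\,\E[v\cdot\one_{v\ge\tau,\,\text{accept}}]$ where their $p$ is the per-buyer \emph{skip} probability), and the reduction to optimizing over $\lambda=n\,p(\tau)$ is the paper's substitution $p=1-c/n$. So the proof shape is the same. The gap is in your concrete instance and in not carrying out the optimization.

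Concretely, your proposed distribution (point mass $1-1/n$ at $0$, $\Pr[v>x]=\tfrac{1}{nx}$ on $[1,n]$) does \emph{not} have $\E[\OPT]=\Theta(1)$ as you claim. A direct computation gives
\[
\E[\OPT]=\int_0^\infty\bigl(1-F(x)^n\bigr)\,dx=\bigl(1-1/e\bigr)+\int_1^n\Bigl(1-\bigl(1-\tfrac{1}{nx}\bigr)^n\Bigr)dx+\cdots=\Theta(\ln n),
\]
since $1-(1-\tfrac{1}{nx})^n\approx 1-e^{-1/x}\approx 1/x$ for $x$ large. Similarly $\E[v\mid v>\tau]$ picks up a $\ln(n/\tau)$ term. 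The net effect is that $\sup_\tau\E[\Alg]/\E[\OPT]$ approaches $1-1/e$ only at rate $O(1/\ln n)$, not $O(1/n)$; and the sign of the lower-order term depends on constants you have not pinned down, so it is not even clear your instance rules out a ratio strictly above $1-1/e+\Theta(1/\ln n)$ for some threshold. The ``tune the tail so that $(1-e^{-\lambda})$ cancels a $1/\lambda$-type profile'' step is the heart of the argument, and it has been asserted rather than done.

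The paper sidesteps all of this by choosing a \emph{two-point} distribution: $v=\tfrac{e-2}{e-1}$ with probability $1-\tfrac{1}{n^2}$ and $v=\tfrac{n}{e-1}$ with probability $\tfrac{1}{n^2}$. Then $\E[\OPT]=1-O(1/n)$ on the nose, any fixed threshold strictly below or strictly above $\tfrac{e-2}{e-1}$ is easily seen to give a ratio below $1-1/e$, and the only interesting case is tie-breaking at $\tfrac{e-2}{e-1}$, controlled by a single skip parameter $p=1-c/n$. This yields in closed form
\[
\E[\Alg]\;\le\;\frac{1-p^n}{e-1}\Bigl(e-2+\frac{1}{n(1-p)}\Bigr)\;=\;\frac{(1-e^{-c})(e-2+1/c)}{e-1}+O(1/n),
\]
which is maximized at $c=1$ where it equals $1-1/e$. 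In your notation this is exactly the quantity $(1-e^{-\lambda})\,h(\lambda)$ with $h(\lambda)\approx\tfrac{e-2}{e-1}+\tfrac{1}{\lambda(e-1)}$, i.e.\ the two-point distribution manufactures the $1/\lambda$-type conditional mean exactly and with all error terms at scale $1/n$. So: same framework, but you need a two-point (or otherwise carefully truncated) instance to actually get the $O(1/n)$ rate, and you still owe the single-variable optimization showing the maximum over $\lambda$ is $1-1/e$.
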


\begin{proof}
	We prove the theorem by giving a hard input instance for every $n$ as follows: every $v_i$ is $n/(e-1)$ with probability $1/n^2$ and is $(e-2)/(e-1)$ otherwise. The expected maximum value of these $n$ items is
	\begin{align*}
		\E[\OPT] &= \left(1-\frac{1}{n^2}\right)^n \cdot\frac{e-2}{e-1} + \left(1-\left(1-\frac{1}{n^2}\right)^n\right)\frac{n}{e-1} \\ &= 1-O\left(\frac{1}{n}\right)\enspace .
	\end{align*}
	
	In this instance, if $\tau < (e-2)/(e-1)$ then the algorithm selects the first item, and if $(e-2)/(e-1)<\tau\leq n/(e-1)$ then the algorithm can only select $n/(e-1)$. In these cases the approximation factor can be at most $(e-2)/(e-1)\approx 0.58$.
	
	Now, note that the CDF of this input distribution is not continuous. Reshaping a discrete distribution function into a continuous one, however, does not change the approximation factor because in the above example we only need a  slight change at the point $(e-2)/(e-1)$ of the CDF. This change gives us a randomness when $\tau=(e-2)/(e-1)$, which is equivalent to flipping a random coin and skipping every item with some probability $p\leq 1-1/n^2$ if the drawn value is $(e-2)/(e-1)$. With this assumption we have
	\begin{align}
		\E[\Alg] &= \sum_{i=1}^{n} p^i \enspace  \E[v_i \cdot \one_{v_i\geq \tau}] \nonumber \\ &= \frac{1-p^n}{1-p}\enspace \E[v_i \cdot \one_{v_i\geq \tau}] \nonumber \\ &= \frac{1-p^n}{1-p}\enspace \left( \left( 1-\frac{1}{n^2}-p \right) \frac{e-2}{e-1} + \frac{1}{n^2}\frac{n}{e-1}\right)\nonumber \\
		&< \frac{1-p^n}{e-1}\left(e-2+\frac{1}{n(1-p)}\right) \enspace . \label{ineq:hardiid}
	\end{align}
	To complete the proof, it suffices to show that the right hand side of Inequality \eqref{ineq:hardiid} is at most $1-1/e+O(1/n)$. To this end, we try to maximize this term based on parameter $c$ where $p=1-c/n$. We can rewrite the right hand side of the inequality as $$\frac{1-(1-\frac{c}{n})^n}{e-1}\left(e-2+\frac{1}{c}\right)\enspace .$$
	If $c = \Theta(n)$ then this term is at most $(e-2+\Theta(1/n))/(e-1) \approx 0.41+O(1/n)$, which is below $1-1/e$ for sufficiently large $n$. Otherwise $c/n \ll 1$ and we can approximate $(1-c/n)^n$ as $e^{-c}+O(1/n)$. This upper bounds Inequality \eqref{ineq:hardiid} by $(1-e^{-c})(e-2+1/c)/(e-1) + O(1/n)$, where the first term is independent of $n$ and is at most $1-1/e$ for different constants $c$; thereby completing the proof.
\end{proof}

We would like to note that the continuity of the CDF of the input distributions is a useful and natural property that can be used by an FTA. This is because making this assumption allows us to design a $(1-1/e)$-approximation algorithm, as shown by Theorem \ref{theorem:singleTalg}, but not assuming this puts a barrier of $1/2$ for any FTA, which is shown in \cite{feldman2015combinatorial,esfandiari2015prophet}. For example, in the above instance the approximation factor without assuming continuity would be at most $(e-2)/(e-1)\approx 0.58$, which is below the $1-1/e\approx 0.63$ claim of Theorem \ref{theorem:singleTalg}. This contradiction is because without this assumption on the input distribution  the algorithm could not set $\tau$ in a way that the probability of selecting an item becomes exactly $1-1/e$.


\medskip
\noindent
{\bf Acknowledgments}.
We thank a number of colleagues for useful discussions. In particular, we are grateful to Matt Weinberg, Bobby Kleinberg, Anupam Gupta, and Hossein Esfandiari. We are also thankful to Jose Correa and Raimundo Juli\'an Saona for comments on our FTAs.


\bibliographystyle{alpha}
\bibliography{bib}

\appendix

\section{Extension of FTAs to Bipartite Matchings} \label{appendix:matching}

Here we study the set of algorithms that use $m$ fixed thresholds $\tau_1,\ldots,\tau_m$ for the items and have a recommendation strategy which at the arrival of every buyer offers her an item at its fixed price. In particular, when buyer $i$ arrives the algorithm recommends an unsold item $k$ to her and she accepts to buy it if $v_{i,k}\geq \tau_k$, i.e. her valuation for the item is greater than or equal to its price.

\begin{theorem}
	For every instance of matching prophet secretary there exists a sequence of fixed thresholds $\tau_1,\ldots,\tau_k$ and a randomized algorithm which is $(1-1/e)$-approximation in expectation.
\end{theorem}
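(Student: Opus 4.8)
The plan is to mimic the structure used for XOS combinatorial auctions (Theorem~\ref{thm:XOS}) and for single-item FTAs (Theorem~\ref{theorem:singleTalg}), combining the fixed-threshold idea with the residual-type accounting, but now working with $m$ per-item thresholds instead of a single one. First I would fix an offline optimum (a maximum-weight matching) and, exactly as in the definition of $\baseprice_j$ in Section~\ref{sec:matching}, let $\baseprice_j$ be the expected value of the buyer matched to item $j$ in the offline optimum. I would then choose each fixed threshold $\tau_j$ so that the marginal probability that item $j$ is never sold equals $1/e$; concretely, for the sequence of buyers that (in a fresh independent draw $\hat{\bval}$) are matched to $j$ by the offline optimum, pick $\tau_j$ so that $\prod_i \Pr[\hat{\val}_{i,j}\ge\tau_j \mid i \text{ matched to } j] = 1/e$, invoking continuity of the CDFs (randomized tie-breaking) exactly as in the footnote to Theorem~\ref{theorem:singleTalg}. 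The recommendation rule is: when buyer $i$ arrives, recommend the item $j$ that the offline optimum would match her to on the fresh sample (as in Lemma~\ref{lem:utilMatching}), and she buys it iff it is still available and $\val_{i,j}\ge\tau_j$.

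The core of the argument splits $\E[\Alg]=\E[\Rev]+\E[\Util]$. For the revenue, since item $j$ is sold with probability exactly $1-1/e$ at price $\tau_j$, we get $\E[\Rev]=\sum_j(1-1/e)\tau_j$. For the utility, I would repeat the conditioning argument of Lemma~\ref{lem:utilMatching}: condition on $\arrivaltime_i=t$, observe that whether $j$ is sold before $t$ depends only on $\bval_{-i}$ and the other arrival times, use Lemma~\ref{lem:withwithouiBip} to replace the conditional availability probability by the unconditional $q_j(t)$, swap $\val_i$ with a fresh sample, and sum over buyers using $\sum_i \E[\one_{j=OP\arrivaltime_i(\hat{\bval})}\hat{\val}_{i,j}]=\baseprice_j$ to obtain $\sum_i\E[u_i\mid\arrivaltime_i=t]\ge\sum_j q_j(t)(\baseprice_j-\tau_j)^+$, hence $\E[\Util]\ge\sum_j (\baseprice_j-\tau_j)^+\int_0^1 q_j(t)\,dt$. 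The last ingredient is a per-item analogue of Lemma~\ref{lem:qtLowerBound}: $q_j(t)\ge e^{-t}$, proved by the same $\ln(1-tx)\ge t\ln(1-x)$ inequality together with the choice of $\tau_j$; this yields $\int_0^1 q_j(t)\,dt\ge 1-1/e$. Putting the pieces together, $\E[\Alg]\ge\sum_j(1-1/e)\bigl(\tau_j+(\baseprice_j-\tau_j)^+\bigr)\ge(1-1/e)\sum_j\baseprice_j=(1-1/e)\E[\OPT]$.

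The main obstacle I expect is the per-item bound $q_j(t)\ge e^{-t}$. In the single-item case the events ``buyer $i$ has not taken the item by $t$'' are independent across $i$, which is what makes the clean product formula and the logarithmic inequality work. Here, whether buyer $i$ ``grabs'' item $j$ depends on $j$ still being available, which is correlated with the behavior of other buyers and other items. The fix is to use a one-sided coupling in the spirit of Lemma~\ref{lem:withwithouiBip}: the true probability that $j$ survives to time $t$ is at least the probability it survives in a hypothetical process where we only let buyers who are offered $j$ act on $j$ and ignore all blocking from other items — in that relaxed process the relevant events are independent, the product/logarithm computation goes through verbatim, and monotonicity (earlier/more activity only removes items) gives $q_j(t)\ge e^{-t}$ for the real process. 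I would also need to double-check that the recommendation strategy based on a fresh independent sample is well-defined and that Lemma~\ref{lem:withwithouiBip} applies to it, but these are exactly the manipulations already carried out in Section~\ref{sec:matching}, so the only genuinely new step is adapting Lemma~\ref{lem:qtLowerBound} to the matching setting.
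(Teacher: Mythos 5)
Your approach is essentially the paper's: same recommendation rule (offer the item matched to $i$ by the offline optimum on a fresh sample $\hat{\bval}_{-i}$), same per-item threshold calibrated so item $j$ survives with probability exactly $1/e$, and the same per-item bound $q_j(t)\geq e^{-t}$ via $\ln(1-tx)\geq t\ln(1-x)$. The paper packages this by first proving a reusable single-item claim (Claim~\ref{claim:stasum}: if $\sum_i\Pr[v_i>0]\leq 1$ then an FTA gets $(1-1/e)\sum_i\E[v_i]$) and then reducing item $j$'s view to such an instance through the distributions $\hat{D}_i$, whereas you inline the revenue-plus-utility computation in the style of Lemma~\ref{lem:utilMatching}; both routes rest on $\E[\OPT]=\sum_j\baseprice_j$ and the observation that $\sum_i\Pr[j=OPT_i]\leq 1$. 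Two small remarks. First, your stated calibration $\prod_i\Pr[\hat{v}_{i,j}\ge\tau_j\mid i\text{ matched to }j]=1/e$ is not the right formula: the event ``$j$ never sold'' is that every buyer fails to trigger $j$, so the condition you want is $\prod_i\bigl(1-\Pr[i\text{ recommended }j\text{ and }v_{i,j}\geq\tau_j]\bigr)=1/e$, i.e.\ $\prod_i\Pr[\hat v_{i,j}<\tau_j]=1/e$ in the paper's notation where $\hat v_{i,j}=0$ when $i$ is not recommended $j$; conditioning on being matched both drops the crucial probability of being matched and has the sign reversed. Second, the coupling fix you propose for $q_j(t)\geq e^{-t}$ is unnecessary: whether $j$ is sold by time $t$ equals whether some buyer recommended $j$ with $v_{i,j}\geq\tau_j$ arrives before $t$ (availability is automatically implied for the first such arrival), and since each buyer's recommendation is drawn from her own value and a fresh independent copy of $\bval_{-i}$, these triggering events are genuinely independent across buyers, so the single-item product formula and logarithmic inequality apply verbatim.
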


\begin{proof}
	Our general approach is to extend the methods we have for single item FTA's to an algorithm for matchings. This generalization is similar to a reduction from matchings to single items. However, there are some details that we have to consider. We first show a stronger claim than the statement of Theorem \ref{theorem:singleTalg} which holds for a specific inputs class of prophet secretary. Then we propose a randomized algorithm which exploits the single item algorithm for that class in order to find such fixed thresholds that lead to a $(1-1/e)$-approximation algorithm for matchings.

	We note that the analysis of Theorem \ref{theorem:singleTalg} indicates we can find a single threshold such that every item will be seen with probability at least $1-1/e$. More precisely, the analysis shows that the algorithm gets the same approximation factor from the utility of every buyer, i.e., $\E[\Util]\geq (1-1/e)\cdot \sum_{i=1}^{n}\E[v_i \cdot \one_{v_i\geq \tau}]$. Now, if an input instance guarantees $\sum_{i=1}^{n}Pr[v_i>0]\leq 1$ then
	\begin{align*}
		\E[\Rev] &=\left(1-\frac{1}{e}\right)\tau \\ &\geq\left(1-\frac{1}{e}\right)\tau\sum_{i=1}^{n}Pr[v_i>0] \\ &\geq\left(1-\frac{1}{e}\right)\sum_{i=1}^{n}\E[v_i \cdot \one_{v_i<\tau}]\enspace .
	\end{align*}
	This results in \begin{align*}
	\E[\Alg] &=\E[\Rev]+\E[\Util] \\
	&\geq\left(1-\frac{1}{e}\right)\sum_{i=1}^{n}(\E[v_i \cdot \one_{v_i<\tau}]+\E[v_i \cdot \one_{v_i\geq \tau}]) \\
	&= \left(1-\frac{1}{e}\right)\sum_{i=1}^{n}\E[v_i]\enspace .
	\end{align*}
		The following claim formally states the above result.
	\begin{claim} \label{claim:stasum}
		If the input of prophet secretary guarantees $\sum_{i=1}^{n} Pr[v_i>0]\leq 1$, then there exists an FTA such that $$\E[\Alg]\geq \left(1-\frac{1}{e}\right)\sum_{i=1}^{n}\E[v_i] \enspace .$$
	\end{claim}

	Now we demonstrate how the matching problem reduces to the instances describe above. Let us assume we already know thresholds $\tau_1,\ldots,\tau_m$ for the $m$ items. Upon the arrival of buyer $i$ and realizing $v_i$ we use the following algorithm to recommend an item $k$ to buyer $i$. We first calculate probabilities $p_1,\ldots,p_m$ where $p_k(v_i):=Pr_{\bval_{-i}}[(i,k)\in \M(\bval_i\cup \bval_{-i})]$ and $\M(B)$ is the maximum matching \footnote{WLOG we can assume it is unique for every graph.} of a bipartite graph $B$. These are in fact the probabilities of each of those edges belonging to the maximum matching. Then, by drawing a random number $r\in[0,1]$ we select a candidate item $k$ if $\sum_{l=1}^{k-1}p_l<r\leq \sum_{l=1}^{k} p_l$. In this way, we dependently select a candidate such that every $k$ becomes selected with probability $p_k$. Note that the algorithm might sometimes select none of the items, in which cases there will be no candidate. Finally we recommend item $k$ to buyer $i$ if the item is still unsold, and she buys it if $v_{i,k}\geq \tau_k$.

	The above method for candidate selection has a close relationship with the optimum solution. To put it into perspective, let us define a new distribution $\hat{D}_i:\mathbb{R}^m\rightarrow[0,1]$ for every buyer $i$. This distribution is supposed to show the valuations of $i$ on the items when they are selected as candidates. In other words, for every vector $x=\langle x_1,\ldots,x_m\rangle$ in which at most one of $x_k$'s is non-zero we have $Pr_{\hat{v}_i\sim\hat{D}_i}[\hat{v}_i=x] := \E_{v_i\sim D_i}[\one_{v_{i,k}=x_k} \cdot \one_{k\text{ is a candidate}}]$. Equivalently, $\hat{D}_i$ can be interpreted as the distribution of the value of the edge incident to $i$ in the maximum matching. This is true because we select a candidate with the probability that it belongs to the maximum matching. Therefore:
	\begin{align}
		\E[\OPT]=\E_{\bval\sim D}[M(\bval)]=\sum_{i=1}^{n}\E_{\hat{\bval}_i\sim \hat{D}_i}[\sum_{k=1}^{m}\hat{v}_{i,k}]\label{eq:matching} \enspace .
	\end{align}

	Now we reduce the problem to the single item case. By looking at a scenario of the problem from the viewpoint of item $k$ we notice that the whole scenario and the algorithm run equivalent to the single item case. This item observes the buyers in a random order such that the valuation of buyer $i$ comes from $\hat{D}_i$. These scenarios occur in parallel for all the items, because no two items are offered to buyers at the same time. In addition, every item $k$ is offered to a buyer with an overall probability of 
	\begin{align*}
		\sum_{i=1}^{n}Pr_{\hat{\bval}_i\sim\hat{D}_i}[\hat{v}_{i,k}>0] &=\sum_{i=1}^{n}\E_{\bval_i, \bval_{-i}}[\one_{v_{i,k}>0} \cdot \one_{(i,k)\in M(\bval_i\cup \bval_{-i})}]\\ &=Pr_{\bval\sim D}[\text{$k$ is matched}]\leq 1\enspace .
	\end{align*}
	Now we can use the result of Claim \ref{claim:stasum}. The right hand side of Equality \eqref{eq:matching} can be written as $\sum_{k=1}^{m}\sum_{i=1}^{n}\E_{\hat{\bval}_i\sim\hat{D}_i}[\hat{v}_{i,k}]$. The claim states that there exists a threshold $\tau_k$ such that the FTA achieves at least $(1-1/e)\sum_{i=1}^{n}\E_{\hat{\bval}_i\sim\hat{D}_i}[\hat{\bval}_{i,k}]$ for every item $k$. Therefore our algorithm is $(1-1/e)$-approximation for the matching of all items.
\end{proof}


\end{document}